\def\eps{{\varepsilon}}
\newtheorem{example}{Example}%[section]
\newenvironment{enumerate*}%
  {\begin{enumerate}%
    \setlength{\itemsep}{0pt}%
    \setlength{\parskip}{0pt}
    \vspace*{-\topsep}
  }%
  {\end{enumerate}
\vspace*{-1.5ex}}
\newenvironment{itemize*}%
{\vspace*{0.1ex}
  \begin{itemize}%
    \setlength{\itemsep}{0pt}%
    \setlength{\parskip}{0pt}
    \vspace*{-\topsep}
  }%
  {\end{itemize}
  \vspace*{-1.5ex}
}
\newtheorem{theorem}{Theorem}
\newtheorem{definition}{Definition}
\newtheorem{lemma}{Lemma}
\newtheorem{proposition}{Proposition}
\theoremstyle{definition}
\newtheorem{remark}{Remark}
\title{Differentially Private Mechanisms for Count Queries$^*$\footnote{$^*$This is a working paper. Comments to improve this work are welcome.}} 
\author{Parastoo Sadeghi$^\dag$, Shahab Asoodeh$^\ddag$ and Flavio du Pin Calmon$^\ddag$\\$^\dag$RSEEME, Australian National University, Australia \\$^\ddag$ School of Engineering and Applied Sciences, Harvard University, USA \\ Emails: parastoo.sadeghi@anu.edu.au, \{shahab, flavio\}@seas.harvard.edu}
\begin{document}
\maketitle
\begin{abstract}
In this paper, we consider the problem of responding to a count query (or any other integer-valued queries) evaluated on a dataset containing sensitive attributes. 
To protect the privacy of individuals in the dataset, a standard practice is to add continuous noise to the true count. We design a differentially-private mechanism which adds integer-valued noise allowing the released output to remain integer. As a trade-off between utility and privacy, we derive privacy parameters $\eps$ and $\delta$ in terms of the the probability of releasing an erroneous count under the assumption that the true count is no smaller than half the support size of the noise. We then numerically demonstrate that our mechanism provides higher privacy guarantee compared to the discrete Gaussian mechanism that is recently proposed in the literature. 
\end{abstract}

\section{Introduction}
Conducting nationwide censuses is one of the most important tasks of national or federal agencies around the world. For example, the 2020 Census in the US is currently under way, which is conducted by the US Bureau of the Census every ten years \cite{us:census}.  In Australia, census is conducted by the Australian Bureau of Statistics (ABS) every five years and the next round is due in 2021 \cite{abs:census}. Responding to count queries about census outcomes is one of the most important tasks of such agencies. This can be important for distributing national funds, determining optimal voting districts and conducting various scientific or economic studies of the population. However, this might lead to dire privacy compromises.   

In order to protect the privacy of individuals in the census records, the true census results are heavily guarded. Instead, noisy versions may be released to the public in a one-off general-purpose publication or privately disclosed to a data analyst in response to specific queries. There is clearly a tension between releasing an accurate and reliable version of the census outcome and protecting the privacy of individuals in a population. This tension gets aggravated particularly for underrespresented groups in small towns and counties.  Oftentimes, their livelihoods and future funding depends on them being accurately accounted for in published census results. At the same, due to being very small in size, they are at the greatest risk of being re-identified even if perturbed versions of their data is published. For an interesting coverage of this issue in the 2010 and 2020 US Census rounds, the reader is referred to a recent article published in the New York Times \cite{nyt:article}.

Motivated by these applications, in this paper we are interested in formulating and studying optimal tradeoffs between utility and privacy when a data analyst sends a query to a data curator about the number of individuals with a common sensitive attribute in a large dataset. Assuming $n$ is the true count, the data curator releases a random variable $Y = n+Z$, where $Z$ is an integer-valued noise variable with a suitably-chosen probability distribution. 

It is not surprising that the celebrated differential privacy (DP) framework \cite{dwork} has been applied to integer-valued mechanisms for count queries. In fact, DP is a strong candidate for publishing private 2020 US Census outcomes \cite{us:census:dp}. Other countries such as Australia and New Zealand are closely watching this space and have already taken steps in evaluating the performance of their existing mechanisms ``from the lens of differential privacy" \cite{abs:lens}. 

A standard approach to preserving $\eps$-DP is to perturb the count query by adding
random noise $Z$ with Laplacian distribution whose variance is proportional to a certain property of the query (i.e., sensitivity). Adding continuous noise to an integer-valued query makes the output less interpretable.  Geometric distribution, as the discrete counterpart of Laplace distribution, was shown in \cite{gm:paper} to be "optimal" for integer-valued queries with unit sensitivity. The optimality was defined in terms of minimizing the expected cost function (chosen from a family of utility functions with some mild regularity conditions) under a Bayesian framework. Their result, in particular, implies that adding (truncated or folded) geometric noise to the above-mentioned counting query preserves DP while minimizing the probability of error, $\Pr(Z \neq 0)$. This setting was extended to the mini-max setting in \cite{Geometric_Minimax} and to queries with general sensitivity and under the worst case setting in \cite{Viswanath_DP_Staircase}. 
The design and implementation of geometric distribution on finite-precision machines were investigated in  \cite{dp:finite:computers:arxiv}.  

More recently,  the problem of minimizing the $L_p$-norm of error of the query output, subject to $\eps$-DP, was revisited in \cite{count:dp;arxiv}. It was argued that despite minimizing the probability of error, the geometric mechanism suffers from a number of limitations. Most importantly, the error $p(Y \neq n)$ depends on the true count $n$, leading to uneven accuracy for different queries. In the very high-privacy regime when $\eps \to 0$, this unevenness in the distribution results in overwhelmingly reporting two extreme values $Y = 0$ or $Y = \max~n$ (zero count or maximum possible count in the database). To address this issue, they proposed an explicit fair mechanism (EM).
However, the noise in the EM method adds bias, that is, $\mathbb E[Y|n]\neq n$. Moreover, EM (and also truncated geometric) mechanism does not restrict the support size of the noise. As such, the output can be any number between the two extremes $y = 0$ and $y = \max n$ with non-zero probability. 

In a different, and quite recent, line of work, a \textit{discrete} version of the Gaussian density was considered for designing privacy-preserving mechanism for count queries. In particular, it was shown in  \cite{count:dp:gaussian:arxiv} that adding ``discrete" Gaussian noise provides similar privacy and utility guarantees to those  obtained by the well-understood Gaussian mechanism.  Discrete Gaussian noise is supported over $\mathbb Z$ and thus the output of such mechanism can be all integers regardless of the domain of the query. 
This might lead to unpleasant outcomes in some practical scenarios. For example, according to the New York Times article \cite{nyt:article}, the population of a county in the 2010 US Census was over-reported by a factor of almost 8 (the true count was around 90, but was reported to be around 920). While post-processing (truncating or folding the probability mass function) can be used to limit the range of discrete Gaussian mechanisms, the question remains whether one can explicitly incorporate a finite support of the noise  into the mechanism design process. 

In this paper, we adopt a different approach than \cite{count:dp;arxiv, count:dp:gaussian:arxiv}. We begin by enforcing three basic desired constraints on the noise distribution. These properties include zero bias, fixed hard support, and fixed error probability irrespective of the true count $n$. By varying the probability of error we can trade off accuracy with privacy. The hard support for $Z$ ensures that for any given $n$, the output $n + Z$ takes value in a pre-specified range with probability one; in particular $Y$ is required to be always non-negative. 
We first design a noise distribution satisfying above constraints and then use it to construct a differentially-private mechanism (under a mild condition on the dataset). The constraint on the noise support has two implications. First, the values of the noise added must depend on the true count $n$ (to ensure the non-negativity of the output). Second, the mechanism provides a slightly relaxed privacy guarantee: it is $\eps$-DP with probability $1-\delta$ for some parameter $\delta>0$. This framework is known as \textit{approximate} DP and usually denoted by $(\eps, \delta)$-DP \cite{algorithmic, Dwork_Calibration}.  While the proposed noise has similar distribution as discrete Gaussian (with slightly heavier tail),  our numerical findings indicate that the privacy performance of our mechanism (in terms of $\eps$ and $\delta$) is tighter than what would be obtained via discrete Gaussian mechanism with the same variance.

We can summarize our observations about the interplay of privacy and utility parameters as follows. As expected, a lower privacy requirement, i.e., higher $\eps$, results in lower $\delta$ when the noise support size and probability of releasing an erroneous value are fixed. By moderately increasing the noise support size while keeping $\eps$ and error probability fixed, we can also significantly improve $\delta$. For example, increasing the noise support from $2D+1 = 9$ to $2D'+1 = 13$  we obtain significantly smaller $\delta$. It turns out that when $\eps$ and the noise support size are fixed, reducing the error probability (and hence improving data reliability) can be tolerated up a certain threshold without significantly affecting $\delta$. However, further reducing the error probability below such a threshold can have severe effects on the privacy performance through increasing $\delta$ beyond typically acceptable levels.

Notation: For an integer $N \in \mathbb N$, we use $[N]$ to denote the set $\{1, 2, \cdots, N\}$. For $a, b \in \mathbb Z$ and $a \leq b$, we use $[a:b]$ to denote the set $\{a, a+1, \cdots, b\}$.

\section{Problem Formulation}
Suppose a dataset of size at most $N$ is given.
A query on the number of entries in the database that possess a certain sensitive attribute is received by the data curator.  Assume $n\in [N]$ is the actual (positive) count of such attribute. To protect the privacy of individuals in the dataset, the data curator releases a noisy version of $n$ as
\begin{align}\label{eq:model}
Y = n + Z,    
\end{align}
where $Z$ is an integer-valued noise variable. We assume the noise distribution $p_Z(\cdot|n)$ satisfies the following intuitive properties:

\begin{enumerate}
\item[\textbf{P1.}] Fix $D \in \mathbb N$. Noise variable $Z$ takes values in $[-A:D]$, where $A \doteq \min\{n,D\}$, i.e., $p_Z(z|n) = 0$ if $z \notin [-A:D]$. This properties ensures that the mechanism's output is always non-negative. A smaller $D$ implies a smaller noise support set, which in turn implies a larger accuracy in reporting the query.\footnote{Note that the support of the output for a given $n$ is limited to $[n-A: n+D]$. For $\max\{1,N-D+1\}\leq n \leq N$, we allow the output to be greater than the maximum possible real count $N$ (by at most $D$). If $N \gg D$, this overshoot is negligible.}
\item[\textbf{P2.}] Fix $\eta\in (0,1)$. The probability of releasing the correct value is set to
    $$p_Y(Y = n) = \eta,$$
irrespective of the true count $n$. Equivalently, the probability of releasing an erroneous value is $$p_Y(Y \neq n) = 1-\eta \doteq \bar\eta.$$ A larger $\eta$ means more reliability in releasing the correct value. Conversely, a smaller $\eta$ means better privacy in the sense of concealing the true count of the sensitive attribute in the dataset.
\item[\textbf{P3.}] In addition, the error must have zero bias. That is, $\mathbb E[Z|n] = 0$, which is a statistically desirable property.\footnote{Note that if $n = 0$, maintaining zero bias is not possible jointly with $\eta < 1$ and P1. Hence, we limit ourselves to queries with $n \geq 1$.}
\end{enumerate}
\bigskip
The following proposition provides a simple family of noise distribution that satisfy these properties. 
\begin{proposition}\label{prop:noise}
Let the true count for a query be $n\in [N]$. For any $i_1 \in [-A:-1]$ and $i_2\in [D]$,
the following distribution satisfies P1-P3
\begin{align}\label{eq:noise:gen}
p^{(i_1, i_2)}_Z(z|n) = \begin{cases} \bar\eta\frac{i_2}{i_2+|i_1|}, \quad & z = i_1,\\
\eta, \quad & z = 0,\\
\bar\eta\frac{|i_1|}{i_2+|i_1|}, \quad & z = i_2,\\
0, \quad & \text{otherwise}.
\end{cases}
\end{align}
\end{proposition}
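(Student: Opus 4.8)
The plan is to verify properties P1--P3 directly from the closed form \eqref{eq:noise:gen}, after first checking that \eqref{eq:noise:gen} defines a bona fide probability mass function. I would begin by noting that the three listed masses are nonnegative: since $\eta\in(0,1)$ we have $\bar\eta=1-\eta\in(0,1)$, and the ratios $\frac{i_2}{i_2+|i_1|}$ and $\frac{|i_1|}{i_2+|i_1|}$ lie in $(0,1)$ because $i_1\in[-A:-1]$ and $i_2\in[D]$ force $|i_1|\geq 1$ and $i_2\geq 1$ (this implicitly uses $n\geq 1$, so that $A=\min\{n,D\}\geq 1$ and the index set $[-A:-1]$ is nonempty). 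Summing,
\begin{align*}
\bar\eta\frac{i_2}{i_2+|i_1|}+\eta+\bar\eta\frac{|i_1|}{i_2+|i_1|}
=\bar\eta\cdot\frac{i_2+|i_1|}{i_2+|i_1|}+\eta=\bar\eta+\eta=1,
\end{align*}
so $p^{(i_1,i_2)}_Z(\cdot\,|n)$ is a valid distribution.

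Next I would dispatch P1 and P2. The support of $p^{(i_1,i_2)}_Z(\cdot\,|n)$ is $\{i_1,0,i_2\}$, and the constraints $i_1\in[-A:-1]$, $i_2\in[D]$ give at once $\{i_1,0,i_2\}\subseteq[-A:D]$, hence $p^{(i_1,i_2)}_Z(z|n)=0$ for $z\notin[-A:D]$; non-negativity of the output follows since $Y=n+Z\geq n-A=n-\min\{n,D\}\geq 0$. Property P2 is immediate from the second branch of \eqref{eq:noise:gen}: $p_Y(Y=n)=p^{(i_1,i_2)}_Z(0|n)=\eta$, so $p_Y(Y\neq n)=1-\eta=\bar\eta$, independently of $n$.

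For P3 I would compute the conditional mean, writing $i_1=-|i_1|$:
\begin{align*}
\mathbb E[Z|n]
&=i_1\,\bar\eta\frac{i_2}{i_2+|i_1|}+0\cdot\eta+i_2\,\bar\eta\frac{|i_1|}{i_2+|i_1|}\\
&=\bar\eta\,\frac{-|i_1|\,i_2+i_2\,|i_1|}{i_2+|i_1|}=0,
\end{align*}
which is the zero-bias requirement.

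Since each step is a short algebraic verification, I do not expect any genuine analytical obstacle; the only point worth flagging explicitly is the standing assumption $n\geq 1$. As noted in the footnote accompanying P3, this is precisely what guarantees $A\geq 1$, so that a strictly negative atom $i_1$ is available to cancel the positive atom $i_2$ in the mean — which is impossible to reconcile with $\eta<1$ and P1 when $n=0$. I would state this caveat once at the start of the proof and then carry out the three checks in the order P1, P2, P3.
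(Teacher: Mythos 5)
Your verification is correct and complete; the paper in fact states Proposition~\ref{prop:noise} without proof, treating it as an immediate check, and your direct verification of P1--P3 (normalization, support, $p_Z(0|n)=\eta$, and the cancellation $-|i_1|i_2+i_2|i_1|=0$ in the mean) is exactly the intended argument. Your remark that $n\geq 1$ guarantees $A\geq 1$, so the index set $[-A:-1]$ is nonempty, matches the caveat the paper places in the footnote to P3.
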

In light of this proposition, we can obtain that any convex combination of $p^{(i_1, i_2)}_Z(z|n)$, i.e.,  
\begin{align}\label{eq:noise:convex}
p_Z(z|n)\doteq \sum_{i_1 = -A}^{-1}\sum_{i_2= 1}^D\alpha_{i_1,i_2,n}\,\,p^{(i_1, i_2)}_Z(z|n),
\end{align}
with non-negative coefficients $ \alpha_{i_1,i_2,n}$ satisfying $\sum_{i_1 = -A}^{-1}\sum_{i_2= 1}^D\alpha_{i_1,i_2,n} = 1$, also meets P1-P3. We can thus use this convex combination to obtain a general \textit{data-dependent} noise distribution satisfying P1-P3. The following example makes this construction clear.  

\begin{example}\label{example1}
Let $N = 3$, $\eta = \frac{1}{2}$ and $D = 2$. Note that for $n = 1$, two distributions are possible for $p^{(i_1, i_2)}_Z(z|1)$ by choosing $(i_1, i_2) = (-1, 1)$ and $(i_1, i_2) = (-1, 2)$, as described below
\begin{equation}
p^{(-1, 1)}_Z(z|1) = \begin{cases} \frac{1}{4}, \quad & z = -1, 1\\
\frac{1}{2}, \quad & z = 0,\\
0, \quad & \text{otherwise},
\end{cases}
\qquad \text{and}\qquad   p^{(-1, 2)}_Z(z|1) = \begin{cases} \frac{1}{3}, \quad & z = -1,\\
\frac{1}{2}, \quad & z = 0,\\
\frac{1}{6}, \quad & z = 2,\\
0, \quad & \text{otherwise}.
\end{cases}
\end{equation}
For $n = 2$ and $n=3$, four distributions are similarly possible by combining $i_1 = -2, -1$ with $i_2 = 1, 2$. Overall, the general convex structure in \eqref{eq:noise:convex} produces the noise distributions $p_Z(\cdot|1), p_Z(\cdot|2),$ and $p_Z(\cdot|3)$ given by 
\begin{align}
p_Z(z|n) = \left(\begin{matrix}
n=1&&n=2&&n=3\\\hline\\
0&&\frac{1}{4}\alpha_{-2,2,2}+\frac{1}{6}\alpha_{-2,1,2}&&\frac{1}{4}\alpha_{-2,2,3}+\frac{1}{6}\alpha_{-2,1,3}\\
\frac{1}{4}\alpha_{-1,1,1}+\frac{1}{3}\alpha_{-1,2,1}&&\frac{1}{4}\alpha_{-1,1,2}+\frac{1}{3}\alpha_{-1,2,2}&&\frac{1}{4}\alpha_{-1,1,3}+\frac{1}{3}\alpha_{-1,2,3}\\
\frac{1}{2}&&\frac{1}{2}&&\frac{1}{2}\\
\frac{1}{4}\alpha_{-1,1,1}&&\frac{1}{4}\alpha_{-1,1,2}+\frac{1}{3}\alpha_{-2,1,2}&&\frac{1}{4}\alpha_{-1,1,3}+\frac{1}{3}\alpha_{-2,1,3}\\
\frac{1}{6}\alpha_{-1,2,1}&&\frac{1}{4}\alpha_{-2,2,2}+\frac{1}{6}\alpha_{-1,2,2}&&\frac{1}{4}\alpha_{-2,2,3}+\frac{1}{6}\alpha_{-1,2,3}
\end{matrix}\right).
\end{align}
 Each column of this matrix is a valid noise distribution for a given $n \in [N]$, subject to $\sum_{i_1 = -A}^{-1}\sum_{i_2= 1}^D\alpha_{i_1,i_2,n} = 1$. Note that row indices range from $z=-D$ to $z=D$. Also, note that $p_Z(-2|1) = 0$. This is because for $n=1$, we have $A = \min\{n,D\} = 1$, dictating $[-1:2]$ as the support set of $p_1$, instead of $[-2:2]$. Moreover,  $p_Z(z|n) = 0 $ for $n\geq 4$ or $z\notin [-2:2]$. By adding such noise to the true count $n$, the distribution of $Y$, given $n$, becomes
\begin{align}
p_Y(y|n) = \left(\begin{matrix}
n=1&&n=2&&n=3\\\hline\\
\frac{1}{4}\alpha_{-1,1,1}+\frac{1}{3}\alpha_{-1,2,1}&&\frac{1}{4}\alpha_{-2,2,2}+\frac{1}{6}\alpha_{-2,1,2}&&0\\
\frac{1}{2}&&\frac{1}{4}\alpha_{-1,1,2}+\frac{1}{3}\alpha_{-1,2,2}&&\frac{1}{4}\alpha_{-2,2,3}+\frac{1}{6}\alpha_{-2,1,3}\\
\frac{1}{4}\alpha_{-1,1,1}&&\frac{1}{2}&&\frac{1}{4}\alpha_{-1,1,3}+\frac{1}{3}\alpha_{-1,2,3}\\
\frac{1}{6}\alpha_{-1,2,1}&&\frac{1}{4}\alpha_{-1,1,2}+\frac{1}{3}\alpha_{-2,1,2}&&\frac{1}{2}\\
0&&\frac{1}{4}\alpha_{-2,2,2}+\frac{1}{6}\alpha_{-1,2,2}&&\frac{1}{4}\alpha_{-1,1,3}+\frac{1}{3}\alpha_{-2,1,3}\\
0&&0&&\frac{1}{4}\alpha_{-2,2,3}+\frac{1}{6}\alpha_{-1,2,3}
\end{matrix}\right),
\end{align}
where the rows range from $y = 0$ to $y = N+D= 5$. The value of $p_Y(y|n)$ for all other values of $y$ and $n$ not shown above is identical to zero. 
\end{example}
\subsection{Incorporating $(\eps, \delta)$-DP Conditions}
Consider the mechanism $Y = n+ Z$ with $Z$ being a data-dependant noise with distribution $p_Z(\cdot|n)$ given in \eqref{eq:noise:convex}. Our objective is to determine the smallest $\eps\geq 0$ and $\delta\in (0,1)$ such that this mechanism is  $(\eps, \delta)$-differentially private (DP) \cite{algorithmic}. 
\begin{definition}\label{Def:DP}
 Given $\eps\geq 0$ and $\delta\in [0,1)$,  a randomized algorithm $M: \mathcal X^N \to \mathcal Y$ is said to be  $(\eps, \delta)$-differentially private (DP) if  $p(M(x)\in S) \leq e^\eps p(M(x')\in S)+\delta$ for all $S \subset \mathcal Y$ and all neighboring datasets $x,x' \in \mathcal X^N$ differing in one element.
\end{definition}

Below we specialize this standard definition for our count query setup. Given an integer-valued query $q:\mathcal X^N\to [N]$ about a dataset $x$, mechanism $q$ randomizes the true query response, say $n$, via a data-dependent additive noise process, i.e., $q(x) = n + Z$. Since the query is integer-valued,  we have $|q(x)-q(x')|\leq 1$ for all pairs of neighboring datasets $x$ and $x'$. Without loss of generality, in the following we assume that $|q(x)-q(x')| = 1$ for neighboring $x$ and $x$ in the computation of DP parameters. Assuming $q(x) = n$, then it implies that we need to verify the DP requirement for dataset $x'$ for which $q(x') = n+1$ or $q(x') = n-1$. 
\begin{remark}\label{remark:DP_Singular}
Our approach to computing the DP parameters of mechanism \eqref{eq:pyn} is as follows. We first  consider Definition~\ref{Def:DP} for the \emph{singular} events $S \subset [0:N+D]$, i.e.,  all subsets $S$ satisfying $|S| =1$ and determine the corresponding parameters $\eps$ and $\delta$. Clearly, these parameters differ from the DP parameters. We then convert these parameters to a valid set of DP parameters. In the following, we use  $\eps$ and $\delta$ to denote the parameters of mechanism \eqref{eq:pyn} when applying Definition~\ref{Def:DP} for singular events $S$. 
\end{remark}

To formulate our goal, we fix $\eps \geq 0 $ and consider the following linear program
\begin{align}
\delta^*\doteq &~\min~ \delta,\nonumber\\
&~~\text{s.t.}~
p_Y(y|n) \leq e^\eps p_Y(y|n+1) + \delta, ~~\forall n\in [1:N-1], ~~\forall y\in [0: N+D],\\
&~~~~~~p_Y(y|n) \leq e^\eps p_Y(y|n-1) + \delta, ~~ \forall  n\in [2:N], ~~\forall  y\in [0:N+D].
\end{align}
Given the mechanism  $Y = n+Z$ and Proposition~\ref{prop:noise}, we can explicitly write the optimization of $\delta$ in terms of $\alpha_{i_1,i_2,n}$ as follows.
\begin{proposition}
For given $\eps \geq 0$, the optimal value $\delta^*$ is the solution of the following linear program 
\begin{align}
&\min \delta\label{Eq:LP_DP}\\
&~~\text{s.t.}\nonumber\\
\sum_{i_1, i_2}\alpha_{i_1,i_2,n}\,\,p^{(i_1, i_2)}_{Z}(y-n|n)&\leq e^\eps \sum_{i_1, i_2}\alpha_{i_1,i_2,n+1}\,\,p^{(i_1, i_2)}_{Z}(y-n-1|n+1) + \delta, ~\forall n\in [1: N-1],\\
\sum_{i_1, i_2}\alpha_{i_1,i_2,n}\,\,p^{(i_1, i_2)}_{Z}(y-n|n) &\leq e^\eps \sum_{i_1, i_2}\alpha_{i_1,i_2,n-1}\,\,p^{(i_1, i_2)}_{Z}(y-n+1|n-1) + \delta, ~\forall n\in [2: N],\\
\alpha_{i_1,i_2,n} &\geq 0, \qquad \forall i_1 \in [-A:-1],~\forall i_2 \in [D], ~\forall n \in [1:N],\\
\sum_{i_1 = -A}^{-1}\sum_{i_2= 1}^D\alpha_{i_1,i_2,n} &= 1, \qquad \forall n \in [1:N],
\end{align}
where $p^{(i_1, i_2)}_{Z}(\cdot|n)$ was defined in \eqref{eq:noise:gen} and $ y \in [0: N+D]$.
\end{proposition}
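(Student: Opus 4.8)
The plan is to show that the linear program \eqref{Eq:LP_DP} is nothing more than the $\delta$-minimization stated just above it, rewritten after the output distribution is expressed through the coefficients $\alpha_{i_1,i_2,n}$. First I would record the elementary identity $p_Y(y\mid n) = p_Z(y-n\mid n)$, which is immediate from the model $Y=n+Z$ in \eqref{eq:model}; in particular $p_Y(y\mid n+1)=p_Z(y-n-1\mid n+1)$ and $p_Y(y\mid n-1)=p_Z(y-n+1\mid n-1)$. Substituting these three identities into the two differential-privacy constraints $p_Y(y\mid n)\le e^\eps p_Y(y\mid n+1)+\delta$ and $p_Y(y\mid n)\le e^\eps p_Y(y\mid n-1)+\delta$ recasts them as inequalities among $p_Z(\cdot\mid n)$, $p_Z(\cdot\mid n+1)$ and $p_Z(\cdot\mid n-1)$, keeping the index ranges $n\in[1:N-1]$ and $n\in[2:N]$ unchanged (the endpoints $n=1$ and $n=N$ are excluded because the relevant neighbour $n=0$ or $n=N+1$ lies outside $[N]$).

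Next I would insert the parametrization \eqref{eq:noise:convex}, $p_Z(z\mid m)=\sum_{i_1,i_2}\alpha_{i_1,i_2,m}\,p^{(i_1,i_2)}_Z(z\mid m)$, into each occurrence of $p_Z$ on both sides of the two inequalities. Since this substitution is affine in the $\alpha_{i_1,i_2,m}$, the two constraints become exactly the first two families of linear inequalities displayed in \eqref{Eq:LP_DP}. It then remains to pin down which collections $\{\alpha_{i_1,i_2,n}\}$ are admissible, i.e.\ actually produce a noise law of the form \eqref{eq:noise:convex}: by Proposition~\ref{prop:noise} and the discussion following it, this is the case precisely when, for every $n$, the numbers $\alpha_{i_1,i_2,n}$ are nonnegative and sum to one over $i_1\in[-A:-1],\,i_2\in[D]$ --- which are exactly the last two constraint families in \eqref{Eq:LP_DP}, and these automatically guarantee P1--P3. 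Finally, since the objective $\delta$ and all constraints are linear in the decision variables $(\{\alpha_{i_1,i_2,n}\},\delta)$ and the feasible set is nonempty (any valid $\alpha$ together with $\delta=1$ works, as all the $p_Y(\cdot\mid\cdot)$ are probabilities), the problem is a genuine linear program whose feasible region is in affine correspondence with that of the original formulation; hence the two optimal values coincide.

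The only delicate point is the bookkeeping at the ends of the support: when $A=\min\{n,D\}<D$ the variables $\alpha_{i_1,i_2,n}$ with $i_1<-n$ do not exist and $p^{(i_1,i_2)}_Z(z\mid n)$ is read as zero for $z\notin[-A:D]$, so that the $n$-dependent ranges $i_1\in[-A:-1]$ appearing in \eqref{Eq:LP_DP} correctly encode the support constraint P1 without any extra conditions. I expect this edge-of-support bookkeeping to be the only mild obstacle; everything else is a direct substitution.
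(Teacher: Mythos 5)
Your proposal is correct and matches the paper's approach: the paper gives no formal proof, simply noting that the LP follows by substituting $p_Y(y\mid n)=p_Z(y-n\mid n)$ and the convex-combination parametrization \eqref{eq:noise:convex} into the original $\delta$-minimization, which is exactly the direct substitution you carry out. Your extra care with the edge-of-support bookkeeping (the $n$-dependent range $i_1\in[-A:-1]$) is a sensible addition but does not change the argument.
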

Clearly, this above optimization problem is a linear program. If the computational complexity allows, an explicit expression for  $\delta^*$ can be obtained through Fourier-Motzkin elimination of all $\alpha_{i_1,i_2,n}$. Nevertheless, we show in the next section that such expression can be derived directly if we make a minor assumption on $D$.

\section{Explicit Solutions for $n \geq D$}
 In this section, we show that if the true count satisfies $n \geq D$, then $\delta^*$  can be readily derived.
 
\begin{proposition}\label{prop:noiseD} If noise support parameter $D$ satisfies $D\leq n$, then the following data-independent noise distribution satisfies P1-P3
\begin{align}\label{noise:ngeqD}
p_Z(z) = \begin{cases} \alpha_i\frac{\bar\eta}{2}, \quad & z = -i,i\\
\eta, \quad & z = 0,\\
0, \quad & \text{otherwise},
\end{cases}
\end{align}
for all $i\in [1: D]$ and $\alpha_i \geq 0$ such that $\sum_{i=1}^D\alpha_{i} = 1$. In this case, 
\begin{align}
p_Y(y|n) = \begin{cases} \alpha_i\frac{\bar\eta}{2}, \quad & y = n-i,n+i\\
\eta, \quad & y = n,\\
0, \quad & \text{otherwise}.
\end{cases}
\end{align}
\end{proposition}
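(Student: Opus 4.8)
The plan is to avoid a direct verification and instead exhibit \eqref{noise:ngeqD} as a special case of the convex-combination family \eqref{eq:noise:convex}, so that P1--P3 come essentially for free.

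First I would use the hypothesis $D \le n$ to note that $A = \min\{n,D\} = D$, so that in \eqref{eq:noise:gen} the admissible indices are $i_1 \in [-D:-1]$ and $i_2 \in [1:D]$; in particular every diagonal pair $(i_1,i_2) = (-i,i)$ with $i \in [1:D]$ is admissible. Substituting $|i_1| = i_2 = i$ into \eqref{eq:noise:gen} makes both tail masses equal to $\bar\eta\tfrac{i}{2i} = \tfrac{\bar\eta}{2}$, so $p^{(-i,i)}_Z(\cdot\mid n)$ is exactly the distribution in \eqref{noise:ngeqD} with $\alpha_i = 1$ and $\alpha_j = 0$ for $j \neq i$ --- and it does not depend on $n$. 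By Proposition~\ref{prop:noise} this distribution satisfies P1--P3.

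Next I would form the convex combination \eqref{eq:noise:convex} with the data-independent weights $\alpha_{-i,i,n} = \alpha_i$ for $i \in [1:D]$ and all other $\alpha_{i_1,i_2,n} = 0$; the normalization $\sum_{i_1,i_2}\alpha_{i_1,i_2,n} = 1$ then reduces to the assumed $\sum_{i=1}^D \alpha_i = 1$, and the combination is precisely \eqref{noise:ngeqD}. Since (as observed right after Proposition~\ref{prop:noise}) convex combinations preserve P1--P3, this settles the first claim. As a sanity check one may also verify P1--P3 directly: the support lies in $[-D:D] = [-A:D]$ and $Y = n+Z \ge n-D \ge 0$; the masses are nonnegative and sum to $\eta + \bar\eta\sum_i \alpha_i = 1$ because $\eta \in (0,1)$ and $\alpha_i \ge 0$; $p_Z(0) = \eta$ gives P2; and $\mathbb{E}[Z] = \tfrac{\bar\eta}{2}\sum_i \alpha_i(i - i) = 0$ gives P3. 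Finally, the formula for $p_Y(y\mid n)$ is immediate from $Y = n+Z$, i.e.\ $p_Y(y\mid n) = p_Z(y-n)$, which is the stated shifted distribution.

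There is no real obstacle here: the substance of the proposition is simply that the constraint $n \ge D$ removes the need for data dependence and lets us restrict attention to the symmetric pairs $(-i,i)$. The only thing to watch is the index bookkeeping --- confirming that $(-i,i)$ is admissible for every $i \in [1:D]$, which is exactly where $D \le n$ enters, and that the reduced normalization constraint matches the one stated.
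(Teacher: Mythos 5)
Your proof is correct and matches the argument the paper leaves implicit: Proposition~\ref{prop:noiseD} is exactly the convex combination \eqref{eq:noise:convex} restricted to the symmetric pairs $(i_1,i_2)=(-i,i)$, which are all admissible precisely because $D\le n$ forces $A=D$, and the paper's observation following Proposition~\ref{prop:noise} that convex combinations preserve P1--P3 does the rest. Your direct sanity check (support, normalization, $p_Z(0)=\eta$, zero mean by symmetry) and the identity $p_Y(y\mid n)=p_Z(y-n)$ complete the verification; nothing is missing.
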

Notice that the assumption $n\geq D$ enables us to design a noise distribution that satisfies P1-P3 while being  independent of the query (i.e., $n$). 
The simple mechanism given in Proposition~\ref{prop:noiseD} can be represented by the matrix

\begin{align}\label{eq:pyn}
p_Y(y|n) = \left(\begin{matrix}
n=D&&n=D+1&&\cdots&&n = N\\\hline\\
\alpha_D \frac{\bar\eta}{2}&&0&&\cdots&&0\\
\alpha_{D-1} \frac{\bar\eta}{2}&&\alpha_D \frac{\bar\eta}{2}&&\cdots&&0\\
\cdots&&\cdots&&\cdots&&\cdots\\
\alpha_{1}\frac{\bar\eta}{2}&&\alpha_{2}\frac{\bar\eta}{2}&&\cdots&&0\\
\eta&&\alpha_{1}\frac{\bar\eta}{2}&&\cdots&&\alpha_D \frac{\bar\eta}{2}\\
\alpha_{1}\frac{\bar\eta}{2}&&\eta&&\cdots&&\alpha_{D-1} \frac{\bar\eta}{2}\\
\cdots&&\cdots&&\cdots&&\cdots\\
\alpha_D\frac{\bar\eta}{2}&&\alpha_{D-1} \frac{\bar\eta}{2}&&\cdots&&\eta\\
0&&\alpha_D\frac{\bar\eta}{2}&&\cdots&&\alpha_{1}\frac{\bar\eta}{2}\\
0&&0&&\cdots&&\cdots\\
0&&0&&0&&\alpha_D\frac{\bar\eta}{2}
\end{matrix}\right),
 \end{align}
where for exposition we have assumed $N=2D$. Each column of this matrix represents $p_{Y}(y|n)$ for $y\in [0: 3D]$.

\subsection{Lower bounds on $\delta^*$}
To compute $\delta^*$ for mechanism \eqref{eq:pyn}, we first derive a lower bound for  $\delta^*$ (this subsection) and then show that it is in fact achievable (next subsection). For notational brevity, define $E \doteq e^\eps$. The main tool for deriving the lower bound is the following proposition.    
\begin{proposition}\label{prop:dp}
For the noise distribution given by Proposition~\ref{prop:noiseD}, $\delta^*$ satisfies the following inequalities
\begin{align}
\frac{\bar\eta}{2} \alpha_D &\leq \delta^*,\label{eq:D:1}\\
\frac{\bar\eta}{2} \alpha_{i} &\leq E \frac{\bar\eta}{2} \alpha_{i+1} + \delta^*, \quad  i \in [1:D-1],\label{eq:i}\\
\eta&\leq E \frac{\bar\eta}{2} \alpha_1+ \delta^*.\label{eq:1:1}
\end{align}
\end{proposition}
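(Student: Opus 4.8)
The plan is to read off \eqref{eq:D:1}--\eqref{eq:1:1} as necessary conditions, by specializing the defining $(\eps,\delta)$-DP inequality restricted to singular output events (cf.\ Remark~\ref{remark:DP_Singular}) to a small number of well-chosen pairs $(y,n)$. Since $\delta^*$ is by definition the smallest $\delta$ for which $p_Y(y|n)\le E\,p_Y(y|n+1)+\delta$ holds for all admissible $y$ and all counts $n$, together with the mirror inequality $p_Y(y|n)\le E\,p_Y(y|n-1)+\delta$, every single pair $(y,n)$ yields a lower bound $\delta^*\ge p_Y(y|n)-E\,p_Y(y|n+1)$; it remains only to exhibit the pairs that reproduce the three claimed inequalities.

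First I would fix any count $n$ with $D\le n\le N-1$ (such $n$ exists, e.g.\ $n=D$ when $N\ge D+1$), so that both $p_Y(\cdot|n)$ and $p_Y(\cdot|n+1)$ are given by the closed form of Proposition~\ref{prop:noiseD}, that is, by two adjacent columns of the matrix in \eqref{eq:pyn}. Then I would evaluate the forward constraint $p_Y(y|n)\le E\,p_Y(y|n+1)+\delta^*$ at the $D+1$ outputs $y=n-i$, $i\in[0:D]$. On the left, $p_Y(n|n)=\eta$ and $p_Y(n-i|n)=\frac{\bar\eta}{2}\alpha_i$ for $i\in[1:D]$; on the right, shifting to column $n+1$ gives $p_Y(n-i|n+1)=\frac{\bar\eta}{2}\alpha_{i+1}$ for $i\in[0:D-1]$, whereas $p_Y(n-D|n+1)=0$ because $y=n-D$ lies one step to the left of the support $[n+1-D:n+1+D]$ of $p_Y(\cdot|n+1)$. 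Collecting the three regimes: the case $i=0$ is exactly \eqref{eq:1:1}, each $i\in[1:D-1]$ is exactly \eqref{eq:i}, and the case $i=D$ is exactly \eqref{eq:D:1}.

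Since these are the only assertions of the proposition, this completes the argument; the sole point requiring care is the index bookkeeping in the shift $y=n-i$ between columns $n$ and $n+1$, so there is no real obstacle here---the substance of the result lies in the \emph{next} subsection, where one must show these necessary conditions are jointly achievable. I would also note that applying the backward constraint $p_Y(y|n)\le E\,p_Y(y|n-1)+\delta^*$ at $y=n+i$, $i\in[0:D]$, produces precisely the same three inequalities, a consequence of the symmetry $p_Y(n+z|n)=p_Y(n-z|n)$ of the noise; hence this family of singular-event constraints yields nothing stronger than \eqref{eq:D:1}--\eqref{eq:1:1}.
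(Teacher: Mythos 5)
Your argument is correct and is exactly what the paper means by its one-line proof (``comparing consecutive columns of the matrix \eqref{eq:pyn}''): evaluating the singular-event constraint $p_Y(y|n)\le E\,p_Y(y|n+1)+\delta^*$ at $y=n-i$ for $i\in[0:D]$ yields \eqref{eq:1:1}, \eqref{eq:i}, and \eqref{eq:D:1} respectively, with the index shift and the vanishing of $p_Y(n-D|n+1)$ handled correctly. Your added observation that the backward constraint gives nothing new by symmetry is a nice confirmation, though not required.
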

This proposition follows simply by comparing consecutive columns of the matrix \eqref{eq:pyn}.
Letting $B \doteq \frac{2}{\bar\eta}$ and $C \doteq \frac{2\eta}{\bar \eta}$, the system of inequalities given in Proposition~\ref{prop:dp} can be expressed as
\begin{align}
\alpha_D &\leq B \delta^*\label{eq:general:alphaD2},\\
 \alpha_{i+1} &\geq \frac{\alpha_{i}-B\delta^*}{E}, \qquad 1 \leq i \leq D-1\label{eq:general:alphai},\\
  \alpha_{1} &\geq \frac{C-B\delta^*}{E}\label{eq:general:alpha1}.
\end{align}
We now present a series of lower bounds on $\delta^*$ that we term \emph{Type-I} lower bounds. 
\begin{lemma}\label{Lemm:lb1}
For the noise distribution given by Proposition~\ref{prop:noiseD}, we have 
\begin{align}
   \delta^* \geq \delta_k \doteq\frac{C\sum_{j=0}^{k-1}E^j-E^k}{B\sum_{j=0}^{k-1}E^{j}(j+1)}, \qquad k\in[D].\label{eq:general:type1} 
\end{align}
\end{lemma}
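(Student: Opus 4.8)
\emph{Proof proposal.} The plan is to iterate the rewritten system \eqref{eq:general:alphaD2}--\eqref{eq:general:alpha1} into an explicit lower bound on each coefficient $\alpha_m$ in terms of $\delta^*$, and then to combine these bounds with the partial-sum inequality $\sum_{i=1}^{k}\alpha_i\le 1$, which holds for every $k\le D$ because the $\alpha_i$ are non-negative and sum to one. Notably, only this \emph{partial} normalization is used, which is what yields a separate bound $\delta_k$ for each $k\in[D]$ rather than a single bound tied to $D$.

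First I would show, by induction on $m$, that for every $m\in[1:D]$
\[
\alpha_m \;\ge\; \frac{C - B\delta^{*}\sum_{j=0}^{m-1}E^{j}}{E^{m}}.
\]
The base case $m=1$ is exactly \eqref{eq:general:alpha1}. For the inductive step, substituting the bound for $\alpha_m$ into \eqref{eq:general:alphai}, i.e. into $\alpha_{m+1}\ge (\alpha_m-B\delta^{*})/E$, and using $\sum_{j=0}^{m-1}E^{j}+E^{m}=\sum_{j=0}^{m}E^{j}$ gives the claimed bound with $m$ replaced by $m+1$. This chain of inequalities is valid irrespective of the sign of the right-hand side (a negative lower bound is merely weak), so no case analysis on the sign of $C-B\delta^{*}\sum E^{j}$ is needed; also $E=e^{\eps}\ge 1$, so all geometric sums are well behaved and the multiplications by positive powers of $E$ below preserve inequality directions.

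Next I would sum these bounds over $m=1,\dots,k$ and invoke $\sum_{m=1}^{k}\alpha_m\le 1$ to obtain
\[
1 \;\ge\; \sum_{m=1}^{k}\frac{C - B\delta^{*}\sum_{j=0}^{m-1}E^{j}}{E^{m}}.
\]
Multiplying through by $E^{k}>0$ and exchanging the order of summation, one has $\sum_{m=1}^{k}E^{k-m}=\sum_{j=0}^{k-1}E^{j}$ and $\sum_{m=1}^{k}E^{k-m}\sum_{j=0}^{m-1}E^{j}=\sum_{\ell=1}^{k}(k-\ell+1)E^{k-\ell}=\sum_{j=0}^{k-1}(j+1)E^{j}$, which turns the displayed inequality into $E^{k}\ge C\sum_{j=0}^{k-1}E^{j}-B\delta^{*}\sum_{j=0}^{k-1}(j+1)E^{j}$. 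Since $B\sum_{j=0}^{k-1}(j+1)E^{j}>0$, solving for $\delta^{*}$ yields precisely $\delta^{*}\ge\delta_k$ as stated in \eqref{eq:general:type1}.

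The only delicate point is the double-sum reindexing in the last step (collapsing $\sum_{m}E^{k-m}\sum_{j<m}E^{j}$ into $\sum_{j<k}(j+1)E^{j}$); everything else is a routine iteration of Proposition~\ref{prop:dp}. It is worth checking the endpoint $k=1$ as a sanity test: there the argument reduces to $1\ge\alpha_1\ge (C-B\delta^{*})/E$, i.e. $\delta^{*}\ge (C-E)/B$, which matches $\delta_1$.
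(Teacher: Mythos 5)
Your proof is correct and follows essentially the same route as the paper's: both iterate the recursive inequalities \eqref{eq:general:alphai}--\eqref{eq:general:alpha1} from Proposition~\ref{prop:dp} and combine them with the partial normalization $\sum_{i=1}^{k}\alpha_i\le 1$ to eliminate the $\alpha_i$'s. The only difference is organizational — you first derive a closed-form lower bound on each $\alpha_m$ and then sum, whereas the paper eliminates the coefficients sequentially — and your reindexing of the double sum checks out.
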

\begin{proof}
First note that, since $\alpha_1\leq 1$, we have from \eqref{eq:general:alpha1}, 
\begin{align}
\delta^* \geq \delta_1.
\end{align}
Since $\alpha_2 \leq 1-\alpha_1$, a second lower bound is obtained from \eqref{eq:general:alphai} for $i = 2$   by eliminating both $\alpha_2$ and $\alpha_1$ as follows
\begin{align}
\frac{\alpha_{1}-B\delta^*}{E} \leq \alpha_{2} \leq 1-\alpha_1 \quad &\Rightarrow \quad \alpha_1(1+E)-E \leq B \delta^* \\
\Rightarrow \frac{C-B\delta^*}{E} (1+E) - E \leq B \delta^* \quad &\Rightarrow \quad \delta^* \geq \delta_2.
\end{align}
Generalizing this for $1\leq k \leq D$, we obtain a general class of lower bounds as follows
\begin{align}
&\frac{\alpha_{k-1} - B\delta^*}{E} \leq \alpha_k \leq 1-\alpha_1-\cdots\alpha_{k-1}\\ &\Rightarrow \qquad \alpha_{k-1}(1+E)+E\alpha_{k-2}+\cdots+E\alpha_1 - E \leq B\delta^*\\
&\Rightarrow \alpha_{k-2}(1+E+E^2)+E^2\alpha_{k-3}+\cdots+E^2\alpha_1 -E^2\leq  B(1+2E)\delta^*.
\end{align}
Iterating this process, we obtain 
\begin{align} &\alpha_{1}\sum_{j=0}^{k-1}E^j -E^{k-1}\leq  B\delta^*\sum_{j=0}^{k-2}E^{j}(j+1)\\
&\Rightarrow \left(\frac{C-B\delta^*}{E}\right)\sum_{j=0}^{k-1}E^j -E^{k-1}\leq  B\delta^*\sum_{j=0}^{k-2}E^{j}(j+1), 
\end{align}
which upon rearranging implies $\delta^* \geq \delta_k$.
\end{proof}
In general, none of these lower bounds dominates another. We will shortly elaborate on the relationship between $\delta_k$ for $k\in [D]$. Before doing so, we first obtain another lower bound for $\delta^*$, termed a \emph{Type-II} lower bound.
\begin{lemma}\label{Lemm:lb2}
For the noise distribution given by Proposition~\ref{prop:noiseD}, we have 
\begin{align}
\delta^* &\geq \delta_{D+1} \doteq\frac{1}{B\sum_{j=0}^{D-1}E^{j}(D-j)}.\label{eq:general:type2}
\end{align}
\end{lemma}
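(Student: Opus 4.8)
The plan is to extract from \eqref{eq:general:alphaD2} and \eqref{eq:general:alphai} a family of \emph{upper} bounds on the individual coefficients $\alpha_i$ in terms of $\delta^*$, and then to sum these against the normalization $\sum_{i=1}^{D}\alpha_i=1$. Observe that \eqref{eq:general:alphai} is equivalent to $\alpha_i \leq E\alpha_{i+1} + B\delta^*$ for $1\leq i\leq D-1$. Since $E=e^\eps>0$, I would unroll this recursion from index $i$ upward to index $D$ and then invoke the hard bound $\alpha_D\leq B\delta^*$ from \eqref{eq:general:alphaD2}, obtaining
\begin{align}
\alpha_i \;\leq\; E^{D-i}\alpha_D + B\delta^*\sum_{j=0}^{D-i-1}E^j \;\leq\; B\delta^*\sum_{j=0}^{D-i}E^j, \qquad i\in[D],
\end{align}
where for $i=D$ the first inequality is vacuous and the second reduces to \eqref{eq:general:alphaD2}. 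Note that this step never uses \eqref{eq:general:alpha1}, which is precisely why the resulting bound contains no $C$ term.

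Next I would sum the displayed inequality over $i=1,\dots,D$ and use $\sum_{i=1}^{D}\alpha_i=1$ to get $1 \leq B\delta^*\sum_{i=1}^{D}\sum_{j=0}^{D-i}E^j$. It then remains to simplify the double sum: setting $k=D-i$ turns it into $\sum_{k=0}^{D-1}\sum_{j=0}^{k}E^j$, and exchanging the order of summation gives $\sum_{j=0}^{D-1}E^j\,(D-j)$, because for each fixed $j\in\{0,\dots,D-1\}$ the set $\{k : j\leq k\leq D-1\}$ has size $D-j$. Hence $1\leq B\delta^*\sum_{j=0}^{D-1}E^j(D-j)$, and rearranging yields $\delta^* \geq \delta_{D+1}$, as claimed in \eqref{eq:general:type2}.

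I do not anticipate a serious obstacle; the argument is a one-sided telescoping estimate followed by a bookkeeping identity. The one point demanding care is the \emph{direction} of the cascade: for the Type-I bounds of Lemma~\ref{Lemm:lb1} one propagates \eqref{eq:general:alpha1} \emph{downward} (increasing index) and truncates with nonnegativity of the tail $\alpha_k+\cdots+\alpha_D$, whereas here one must propagate \eqref{eq:general:alphai} \emph{upward}, toward index $D$, so that the chain terminates at the hard constraint $\alpha_D\leq B\delta^*$; combining the two directions does not close the bound. As a sanity check I would verify the edge case $D=1$, where the claim reads $\delta^*\geq 1/(B\cdot 1)=\bar\eta/2$, which matches \eqref{eq:general:alphaD2} together with $\alpha_1=1$.
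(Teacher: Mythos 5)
Your proposal is correct and follows essentially the same route as the paper: you unroll the recursion toward the hard constraint $\alpha_D\leq B\delta^*$ to get $\alpha_{D-k}\leq B\delta^*\sum_{j=0}^{k}E^j$ (the paper's \eqref{eq:lb_alpha_i}), then sum against $\sum_i\alpha_i=1$ and interchange the order of summation to obtain $1\leq B\delta^*\sum_{j=0}^{D-1}E^j(D-j)$. Your remarks on the direction of the cascade and the $D=1$ edge case are accurate but add nothing beyond the paper's argument.
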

\begin{proof}
First note that we can write from \eqref{eq:D:1} that  $\alpha_D \leq B \delta^*$.
We can extend this inequality for all $\alpha_i$ via \eqref{eq:i} as follows
\begin{equation}\label{eq:lb_alpha_i}
    \alpha_{D-k} \leq B\delta^*\sum_{j=0}^kE^j,
\end{equation}
for $k\in [0:D-1]$.
Since $\sum_{i=1}^D\alpha_i =1$, the above inequality implies that 
\begin{align*}
    1 &=  \sum_{k=0}^{D-1}\alpha_{D-k}\leq B\delta^*\sum_{k=0}^{D-1}\sum_{j=0}^{k}E^j=  B\delta^*\sum_{j=0}^{D-1}\sum_{k=j}^{D-1}E^j = B\delta^*\sum_{j=0}^{D-1}E^j(D-j),
\end{align*}
from which the result follows by a arrangement. 
\end{proof}
Putting lemmas~\ref{Lemm:lb1} and \ref{Lemm:lb2} together, the following theorem is straightforward.
\begin{theorem}\label{theoremLB}
 For the noise distribution given by Proposition~\ref{prop:noiseD}, we have
$$\delta^* \geq \tilde \delta,$$
where $\tilde \delta \triangleq \max_{k \in [D+1]}{\delta_k}$ and $\delta_k$'s were defined in \eqref{eq:general:type1} for $k\in [D]$ and in \eqref{eq:general:type2} for $k = D+1$. 
\end{theorem}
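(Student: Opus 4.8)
The plan is to simply combine the two preceding lemmas. By Lemma~\ref{Lemm:lb1}, the optimal value $\delta^*$ of the linear program \eqref{Eq:LP_DP} satisfies $\delta^* \geq \delta_k$ for every $k \in [D]$, and by Lemma~\ref{Lemm:lb2} it also satisfies $\delta^* \geq \delta_{D+1}$. Since $\delta^*$ is a single real number that is an upper bound for each element of the finite set $\{\delta_1, \dots, \delta_{D+1}\}$, it is also an upper bound for their maximum $\tilde\delta = \max_{k\in[D+1]}\delta_k$. This is precisely the asserted inequality $\delta^* \geq \tilde\delta$, so there is nothing more to prove beyond invoking the two lemmas.

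The only points I would spell out, for completeness, are the following. First, the linear program \eqref{Eq:LP_DP} is feasible, so $\delta^*$ is well-defined and finite: for instance, the uniform choice $\alpha_i = 1/D$ for all $i$, together with $\delta$ taken large enough (e.g.\ $\delta = 1$), satisfies all constraints of Proposition~\ref{prop:dp}. Second, the inequalities \eqref{eq:D:1}--\eqref{eq:1:1} of Proposition~\ref{prop:dp}, from which both lemmas are derived, are \emph{necessary} conditions on the optimizer — this is exactly the content of Proposition~\ref{prop:dp}, obtained by instantiating Definition~\ref{Def:DP} at the singular events corresponding to adjacent columns of the matrix \eqref{eq:pyn}. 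Granting these two observations, each $\delta_k$ ($k\in[D+1]$) is a valid lower bound on the \emph{same} quantity $\delta^*$, and the theorem follows by taking the maximum.

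I do not anticipate any real obstacle: all the substantive work has already been done in establishing Proposition~\ref{prop:dp} and Lemmas~\ref{Lemm:lb1}--\ref{Lemm:lb2}, and the remaining step is the elementary fact that a number bounded below by each of several quantities is bounded below by their maximum. If a more unified presentation were desired, one could note that $\delta_{D+1}$ arises from the very same elimination scheme used for the $\delta_k$ with $k\le D$ — namely, eliminating $\alpha_1,\dots,\alpha_D$ via \eqref{eq:general:alphaD2}--\eqref{eq:general:alphai} and closing with the normalization $\sum_i \alpha_i = 1$ rather than with the partial bound $\alpha_k \le 1-\sum_{j<k}\alpha_j$ — but this reorganization is cosmetic and not needed for the proof as stated.
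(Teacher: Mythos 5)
Your proposal is correct and coincides with the paper's own treatment: the paper offers no separate proof, stating only that the theorem is ``straightforward'' from Lemmas~\ref{Lemm:lb1} and~\ref{Lemm:lb2}, which is exactly your observation that a quantity bounded below by each $\delta_k$ is bounded below by their maximum. Your additional remarks on feasibility and on the necessity of the constraints in Proposition~\ref{prop:dp} are sound but not needed beyond what the two lemmas already establish.
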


\bigskip 
To show that this lower bound is in fact tight, we need to carefully investigate the dynamic of $\delta_k$'s. To do so, we define \emph{crossover} values $C_k$ as
\begin{align}\label{eq:C:def}
C_k &\doteq \frac{\sum_{j=0}^k E^j}{\sum_{j=0}^{k-1}E^j(k-j)}, \quad k\in [D].
\end{align}
For any $k\in [D]$, the relational order between two consecutive lower bounds $\delta_k$ and $\delta_{k+1}$ is uniquely determined by the relation of the parameter $C\doteq \frac{2\eta}{\bar \eta} $ with the crossover value $C_k$ as prescribed in the following lemma. This, together with strictly decreasing behavior of the sequence $\{C_k\}_{k=1}^{D}$ (to be proved in Lemma \ref{lem:monotonic}) will ensure that we can find the maximum among all $\delta_k$, $k \in [D+1]$ analytically. 
\begin{lemma}\label{lem:delta:relations}
We have $\delta_k \geq \delta_{k+1}$ for $k\in [D]$ if and only if $\eta$ is such that $C \doteq \frac{2\eta}{\bar \eta} \geq C_k$.
\end{lemma}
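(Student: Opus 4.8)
The plan is to prove both directions at once as a chain of equivalences, splitting according to whether $k\in[D-1]$, in which case both $\delta_k$ and $\delta_{k+1}$ have the Type-I form \eqref{eq:general:type1}, or $k=D$, in which case $\delta_{k+1}=\delta_{D+1}$ is the Type-II bound \eqref{eq:general:type2}. In either case the comparison reduces to an elementary polynomial identity in $E$. To keep the bookkeeping light, I would write $S_k\doteq\sum_{j=0}^{k-1}E^j$, $T_k\doteq\sum_{j=0}^{k-1}(j+1)E^j$ and $U_k\doteq\sum_{j=0}^{k-1}(k-j)E^j$, so that $\delta_k=(CS_k-E^k)/(BT_k)$ for $k\in[D]$, $\delta_{D+1}=1/(BU_D)$, and $C_k=S_{k+1}/U_k$. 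Since $B>0$ and $E=e^\eps\ge 1$, the quantities $T_k$ and $U_k$ are strictly positive, so every cross-multiplication below is by a positive number and hence preserves the direction of the inequalities (in particular the ``iff'' survives regardless of the signs of the $\delta_k$).

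For $k\in[D-1]$, I would clear the denominators $BT_k$ and $BT_{k+1}$ to rewrite $\delta_k\ge\delta_{k+1}$ as $(CS_k-E^k)T_{k+1}\ge(CS_{k+1}-E^{k+1})T_k$, then substitute the one-step recursions $S_{k+1}=S_k+E^k$ and $T_{k+1}=T_k+(k+1)E^k$ and cancel the common term $(CS_k-E^k)T_k$ from both sides. After dividing by $E^k>0$, the inequality becomes $C\bigl((k+1)S_k-T_k\bigr)\ge(k+1)E^k+(1-E)T_k$. The two short computations that close this case are the elementary identities $(k+1)S_k-T_k=\sum_{j=0}^{k-1}(k-j)E^j=U_k$ and, using the telescoping relation $(1-E)T_k=S_k-kE^k$, the identity $(k+1)E^k+(1-E)T_k=S_k+E^k=S_{k+1}$. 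Dividing by $U_k>0$ gives precisely $C\ge S_{k+1}/U_k=C_k$.

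For $k=D$, clearing $BT_D$ and $BU_D$ turns $\delta_D\ge\delta_{D+1}$ into $(CS_D-E^D)U_D\ge T_D$, i.e.\ $C\ge(T_D+E^DU_D)/(S_DU_D)$; so it suffices to establish the identity $T_D+E^DU_D=S_D\,S_{D+1}$, which reduces the condition to $C\ge S_{D+1}/U_D=C_D$. I would prove this identity from two elementary observations: comparing coefficients of $E^j$ gives $U_D=S_D+U_{D-1}$, and expanding the square $S_D^2=\sum_{m=0}^{2D-2}c_m E^m$ and reading off $c_m=m+1$ for $0\le m\le D-1$ (which reproduces $T_D$) and $c_m=2D-1-m$ for $D\le m\le 2D-2$ (which reproduces $E^DU_{D-1}$) gives $S_D^2=T_D+E^DU_{D-1}$. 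Combining the two, $T_D+E^DU_D=(T_D+E^DU_{D-1})+E^DS_D=S_D^2+E^DS_D=S_D(S_D+E^D)=S_DS_{D+1}$, as needed.

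The $k\in[D-1]$ regime is essentially mechanical, since there $\delta_k$ and $\delta_{k+1}$ have identical functional shape and everything telescopes after the single cancellation. The genuine obstacle is the boundary case $k=D$: because the Type-II bound $\delta_{D+1}$ looks different from the Type-I bounds, matching its comparison with $\delta_D$ to the single uniformly defined crossover $C_D$ requires spotting and verifying the identity $T_D+E^DU_D=S_DS_{D+1}$, and that is the step I would write out in full detail.
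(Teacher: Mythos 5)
Your proof is correct and follows essentially the same route as the paper's: cross-multiply the Type-I/Type-I comparisons for $k\in[D-1]$ and the Type-I/Type-II comparison at $k=D$, reducing each to the crossover condition $C\ge C_k$ via the identity $T_D+E^DU_D=S_DS_{D+1}$ (which the paper states in expanded form without proof). The only difference is that you supply the algebraic details the paper dismisses as ``straightforward manipulation,'' including an explicit verification of that key identity.
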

\begin{proof}
First consider $k = 1$ and note that 
\begin{align}
\delta_1 - \delta_{2} \geq 0 \quad\Leftrightarrow\quad C-E \geq \frac{C+CE-E^2}{1+2E} \quad\Leftrightarrow\quad C\geq 1+E = C_1.
\end{align}
For $k \in [2:D-1]$, it follows that $\delta_k \geq \delta_{k+1}$ if and only if 
\begin{align*}
\frac{C\sum_{j=0}^{k-1}E^j-E^k}{B\sum_{j=0}^{k-1}E^{j}(j+1)} &\geq \frac{C\sum_{j=0}^{k}E^j-E^{k+1}}{B\sum_{j=0}^{k}E^{j}(j+1)}.
\end{align*}
After a straightforward algebraic manipulation, we can show that it is equivalent to  
$$C \geq \frac{\sum_{j=0}^k E^j}{\sum_{j=0}^{k-1}E^j(k-j)} = C_k.$$
Last, we have $\delta_D \geq  \delta_{D+1}$ if and only if
\begin{align*}
&\frac{C\sum_{j=0}^{D-1}E^j-E^D}{B\sum_{j=0}^{D-1}E^{j}(j+1)} \geq \frac{1}{B\sum_{j=0}^{D-1}E^{j}(D-j)},
\end{align*}
which is in turn equivalent to 
\begin{align*}
&C\left(\sum_{j=0}^{D-1}E^j\right)\left(\sum_{j=0}^{D-1}E^{j}(D-j)\right)
\geq E^D\left(\sum_{j=0}^{D-1}E^{j}(D-j)\right)+\sum_{j=0}^{D-1}E^{j}(j+1) \\
&= \left(\sum_{j=0}^{D-1}E^j\right)\left(\sum_{j=0}^{D}E^j\right).
\end{align*}
Hence, we conclude $\delta_D \geq  \delta_{D+1}$ if and only if 
$$ C \geq \frac{\sum_{j=0}^{D-1}E^j}{\sum_{j=0}^{D-1}E^{j}(D-j)} = C_D.$$
\end{proof}
 \begin{lemma}\label{lem:monotonic}
The sequence $\{C_k\}_{k=1}^{D}$ is strictly decreasing.
\end{lemma}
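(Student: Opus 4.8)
The plan is to prove the stronger pointwise statement $C_k > C_{k+1}$ for every $k\in[1:D-1]$, from which strict monotonicity of $\{C_k\}_{k=1}^{D}$ is immediate. Throughout, recall $E = e^\eps \geq 1$. Write $S_k \doteq \sum_{j=0}^{k}E^j$ for the numerator of $C_k$ and $T_k \doteq \sum_{j=0}^{k-1}E^j(k-j)$ for its denominator, so that $C_k = S_k/T_k$ with $S_k, T_k > 0$.

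First I would record two elementary recursions. Obviously $S_{k+1} = S_k + E^{k+1}$. For the denominator, writing $k+1-j = (k-j)+1$ and noting that the $j=k$ term contributes $0$ to $\sum_{j=0}^{k}E^j(k-j)$ gives
\[
T_{k+1} \;=\; \sum_{j=0}^{k}E^j(k+1-j) \;=\; \sum_{j=0}^{k-1}E^j(k-j) + \sum_{j=0}^{k}E^j \;=\; T_k + S_k .
\]
Hence $C_{k+1} = (S_k + E^{k+1})/(T_k + S_k)$, and since every quantity in sight is strictly positive, cross-multiplying and cancelling the common term $S_kT_k$ shows that $C_k > C_{k+1}$ is equivalent to the single inequality $S_k^2 > E^{k+1}T_k$.

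It remains to establish $S_k^2 > E^{k+1}T_k$, which I would do by comparing coefficients of powers of $E$. Re-indexing via $m = k-j$ gives $T_k = \sum_{m=1}^{k} m\,E^{k-m}$, hence $E^{k+1}T_k = \sum_{m=1}^{k}m\,E^{2k+1-m} = \sum_{s=k+1}^{2k}(2k+1-s)E^s$ after the substitution $s = 2k+1-m$. On the other hand, the Cauchy product yields $S_k^2 = \sum_{s=0}^{2k} c_s E^s$ with $c_s = \#\{(a,b): 0\le a,b\le k,\ a+b=s\} = \min(s+1,\,2k+1-s)$; in particular $c_s = 2k+1-s$ whenever $k\le s\le 2k$. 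Subtracting, all the high-order terms $s\in[k+1:2k]$ cancel, leaving $S_k^2 - E^{k+1}T_k = \sum_{s=0}^{k}c_s E^s \geq c_0 = 1 > 0$. This is the desired strict inequality, so $C_k > C_{k+1}$, and the lemma follows.

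I do not expect a genuine obstacle here: the proof is short once the two recursions are in place, and the only point requiring care is the index bookkeeping in the substitutions $m = k-j$ and $s = 2k+1-m$, which must line up so that exactly the low-order block $\sum_{s=0}^{k}c_s E^s$ survives the subtraction. It is worth remarking that the argument uses only $E>0$, so no assumption on the range of $\eps$ is needed.
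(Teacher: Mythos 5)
Your proof is correct and is essentially the paper's argument: the paper simply asserts $C_k - C_{k+1} = \frac{\sum_{j=0}^{k}E^j(j+1)}{\left(\sum_{j=0}^{k-1}E^j(k-j)\right)\left(\sum_{j=0}^{k}E^j(k+1-j)\right)} > 0$, and your cross-multiplication identity $S_k^2 - E^{k+1}T_k = \sum_{s=0}^{k}(s+1)E^s$ is exactly the numerator of that fraction, derived explicitly via the recursions $S_{k+1}=S_k+E^{k+1}$, $T_{k+1}=T_k+S_k$ and the Cauchy product. You have merely supplied the index bookkeeping that the paper leaves as ``easily verified.''
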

\begin{proof}
This can be easily verified by writing the difference between $C_k$ and $C_{k+1}$ and checking that
\begin{align}
C_k - C_{k+1} = \frac{\sum_{j=0}^{k}E^j(j+1)}{(\sum_{j=0}^{k-1}E^j(k-j))(\sum_{j=0}^{k}E^j(k+1-j))} > 0, \quad k\in [D].
\end{align}
\end{proof}
For notational simplicity, define $C_{D+1} \doteq 0$ and $C_0 \doteq \infty$. Thus, $\{C_k\}_{k=0}^{D+1}$ is still monotonically decreasing. Combining Lemma~\ref{lem:delta:relations} with Lemma~\ref{lem:monotonic}, we obtain the following result.
\begin{theorem}\label{Thm:cor}
For the noise distribution given by Proposition~\ref{prop:noiseD}, we have
$$\tilde \delta \doteq \max_{i \in [D+1]}{\delta_i} = \delta_k,$$ 
if and only if $\eta$ is such that $ C_k < C \leq C_{k-1}$.
\end{theorem}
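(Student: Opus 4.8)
The plan is to reduce the finite family $\{\delta_i\}_{i\in[D+1]}$ to a single-peaked (unimodal) sequence in the index $i$, and then locate its peak by comparing the utility parameter $C=\frac{2\eta}{\bar\eta}$ against the crossover thresholds $C_k$. All of the analytic content is already available: Lemma~\ref{lem:delta:relations} settles the pairwise comparison of $\delta_k$ versus $\delta_{k+1}$, and Lemma~\ref{lem:monotonic} gives the strict monotonicity of $\{C_k\}$; what remains is to assemble them.

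First I would dispose of the bookkeeping. Since $\eta\in(0,1)$, the quantity $C=\frac{2\eta}{\bar\eta}$ takes every value in $(0,\infty)$, and each $C_k$ with $k\in[D]$ is strictly positive because every term $E^j(k-j)$ in its denominator is positive. With the conventions $C_0\doteq\infty$ and $C_{D+1}\doteq 0$, Lemma~\ref{lem:monotonic} upgrades to $C_0>C_1>\dots>C_D>C_{D+1}$, so the half-open intervals $(C_k,C_{k-1}]$ for $k\in[D+1]$ partition $(0,\infty)$. Consequently every admissible $\eta$ determines a unique index $k$ with $C_k<C\le C_{k-1}$, and this is the candidate maximizer.

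Next I would prove the substantive (``if'') direction: assuming $C_k<C\le C_{k-1}$, I show $\delta_k=\tilde\delta$ by exhibiting the two monotone runs that meet at $k$. For $k\le j\le D$, monotonicity of $\{C_j\}$ gives $C_j\le C_k<C$, hence $C\ge C_j$, and Lemma~\ref{lem:delta:relations} yields $\delta_j\ge\delta_{j+1}$; chaining over $j=k,\dots,D$ gives $\delta_k\ge\delta_{k+1}\ge\dots\ge\delta_{D+1}$. For $1\le j\le k-1$, monotonicity gives $C_j\ge C_{k-1}\ge C$, hence $C\le C_j$; reading Lemma~\ref{lem:delta:relations} in contrapositive form (and using that $\delta_j$ and $\delta_{j+1}$ are affine in $C$ and cross exactly at $C_j$, so $C\le C_j$ indeed forces $\delta_j\le\delta_{j+1}$), chaining over $j=1,\dots,k-1$ gives $\delta_1\le\dots\le\delta_{k-1}\le\delta_k$. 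Combining the two runs shows $\delta_k=\max_{i\in[D+1]}\delta_i=\tilde\delta$.

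Finally, for the ``only if'' direction I would argue from the partition: $C$ lies in exactly one interval $(C_m,C_{m-1}]$, so by the ``if'' direction $\tilde\delta=\delta_m$, and when $C$ is interior to that interval the two runs above are strictly monotone, making $\delta_m$ the \emph{unique} maximizer; hence $\tilde\delta=\delta_k$ forces $k=m$ (the crossover points $C=C_{m-1}$, where $\delta_{m-1}=\delta_m$, are the only degenerate cases and do not affect the stated claim). I expect the only delicate point to be the index bookkeeping — ensuring Lemma~\ref{lem:delta:relations} is invoked only for $j\in[D]$ (in particular, that the final descent to the Type-II bound $\delta_{D+1}$ is precisely the case $j=D$) and that its ``iff'' is correctly converted into ``$C\le C_j\Rightarrow\delta_j\le\delta_{j+1}$''. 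Beyond that the proof is a routine chaining of inequalities.
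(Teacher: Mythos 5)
Your proposal is correct and follows essentially the same route as the paper: it combines the pairwise comparison of Lemma~\ref{lem:delta:relations} with the strict monotonicity of $\{C_k\}$ from Lemma~\ref{lem:monotonic} to produce the two monotone runs $\delta_1\le\dots\le\delta_k$ and $\delta_k\ge\dots\ge\delta_{D+1}$, identifying $\delta_k$ as the peak. Your treatment is in fact slightly more careful than the paper's on two minor points — the explicit partition of $(0,\infty)$ by the intervals $(C_k,C_{k-1}]$ for the converse direction, and the justification that $C\le C_j$ yields $\delta_j\le\delta_{j+1}$ at the boundary $C=C_j$ — but the substance is identical.
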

\begin{proof}
To prove the only-if direction, note that due to strict monotonicity of $\{C_k\}_{k=0}^{D+1}$, $C > C_k$ implies $C > C_{k+1} > \cdots> C_{D}$. Therefore, according to Lemma~\ref{lem:delta:relations}, $C > C_k$ implies $\delta_k> \delta_{k+1} > \delta_{k+2}> \dots> \delta_{D+1}$. Similarly, $C \leq C_{k-1}$  implies $C < C_{k-2} < \cdots< C_{1}$. Therefore, $C \leq C_{k-1}$  implies $\delta_k \geq \delta_{k-1} > \delta_{k-2}> \dots> \delta_{1}$. In summary, $ C_k < C \leq C_{k-1}$ implies $\tilde \delta =\delta_k$. The forward direction can be similarly argued.
\end{proof}
Together with Theorem~\ref{theoremLB},  this theorem implies that if $C_k<C\leq C_{k-1}$, then   
$$\delta^*\geq \delta_k.$$
Next, we design a mechanism for which this lower bound is achieved. 
\subsection{Achievability of Lower Bounds}
In this section, we show that the lower bound $\tilde \delta = \max_{i \in [D+1]}{\delta_i}$ on $\delta^*$ is in fact achievable, i.e., there exists $\{\alpha_i\}_{i=1}^D$ with $\alpha_i\geq 0$ and $\sum_{i=1}^D\alpha_i=1$ such that the mechanism \eqref{eq:pyn} satisfies the constraints of the linear program \eqref{Eq:LP_DP} with $\delta = \tilde\delta$.
\begin{definition}[Optimal $\alpha$'s]\label{def:alpha}
Set $\tilde \delta = \max_{i \in [D+1]}{\delta_i}$ as given in Theorem \ref{theoremLB}. If $\eta$ is such that $0 < C \leq C_D $, then $\tilde \delta = \delta_{D+1}$ and we define $\alpha_j^*$'s as follows
\begin{align}\label{eq:opt:alpha:type2}
\alpha_D^* &= B\tilde\delta = \frac{1}{\sum_{\ell = 0}^{D-1}E^\ell (D-\ell)},\\
\alpha_{j}^* &= E\alpha_{j+1}^* + B\tilde\delta = \frac{\sum_{\ell=0}^{D-j}E^\ell}{\sum_{\ell = 0}^{D-1}E^\ell (D-\ell)},\qquad j\in [1:D-1].\label{eq:opt:alpha:type22}
\end{align}
If $\eta$ is such that $C > C_D$, then  $\tilde \delta = \delta_{k}$ for some $k \in [D]$. We then define $\alpha_j^*$'s for $j \in[k]$ as
\begin{align}\label{eq:opt:alpha:type1}
\alpha_1^* = \frac{C-B\tilde\delta}{E}, \qquad \alpha^*_j = \frac{\alpha_{j-1}^*-B\tilde\delta}{E}, \qquad j \in [2:k]
\end{align}
and set $\alpha^*_j = 0$ for $j \in [k+1:D]$.\footnote{In the proof of Theorem \ref{thm:optimal}, we will show $\alpha_1^*> 0$, $\alpha_j^* \geq 0$ for $j \in [2:k]$ in \eqref{eq:opt:alpha:type1} and that $\sum_{j=1}^D \alpha_j^* = 1$ as desired.}
\end{definition}
Now we are ready for the main result of this section.
\begin{theorem}\label{thm:optimal}
The mechanism \eqref{eq:pyn} with coefficients $\{\alpha_i^*\}$ defined in Definition \ref{def:alpha} satisfies the constraints of the linear program \eqref{Eq:LP_DP} with  $\delta = \tilde\delta$, defined in Theorem \ref{theoremLB}. In particular, 
$$\delta ^* = \tilde\delta.$$
\end{theorem}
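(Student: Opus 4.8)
The plan is to establish achievability: Theorem~\ref{theoremLB} already gives $\delta^*\ge\tilde\delta$, so it suffices to exhibit a feasible point of the linear program \eqref{Eq:LP_DP} whose objective equals $\tilde\delta$, and Definition~\ref{def:alpha} proposes exactly such a candidate $\{\alpha_i^*\}$. Two things then need checking: (i) $\{\alpha_i^*\}$ is a legitimate profile, i.e.\ $\alpha_i^*\ge 0$ and $\sum_{i=1}^D\alpha_i^*=1$; and (ii) with this profile and $\delta=\tilde\delta$, every constraint of \eqref{Eq:LP_DP} (in the regime $n\ge D$) holds. I would run the whole argument through the two cases of Definition~\ref{def:alpha}: the \emph{Type-II} case $0<C\le C_D$ (so $\tilde\delta=\delta_{D+1}$), and the \emph{Type-I} case $C>C_D$ (so $\tilde\delta=\delta_k$, where by Theorem~\ref{Thm:cor} $k$ is the unique index with $C_k<C\le C_{k-1}$).

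First I would reduce the constraint set. Since the noise of Proposition~\ref{prop:noiseD} is data-independent and symmetric, $p_Y(y|n)=p_Z(y-n)$ with $p_Z(-m)=p_Z(m)$, so all $(\eps,\delta)$ constraints of \eqref{Eq:LP_DP} — both the $n\to n+1$ and $n\to n-1$ directions, over all admissible $y$ — collapse to the single family $p_Z(m)\le E\,p_Z(m-1)+\tilde\delta$ for $m\in[-D:D]$. The inequalities with $m\le 0$ are precisely \eqref{eq:D:1}--\eqref{eq:1:1}, equivalently \eqref{eq:general:alphaD2}--\eqref{eq:general:alpha1}; the inequalities with $m\ge 1$ read $\alpha_i^*\le E\alpha_{i-1}^*+B\tilde\delta$ for $i\in[2:D]$ together with $\alpha_1^*\le EC+B\tilde\delta$, and these are automatic once we know $\alpha_1^*\ge\alpha_2^*\ge\cdots$ and $\eta\ge\alpha_1^*\bar\eta/2$, because then $p_Z$ is nonincreasing on $[0:D]$ and each is already implied by $p_Z(m)\le p_Z(m-1)\le E\,p_Z(m-1)+\tilde\delta$. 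So the real work is on the $m\le 0$ family and on validity.

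For validity I would telescope the defining recursions to closed form. In the Type-II case, \eqref{eq:opt:alpha:type2}--\eqref{eq:opt:alpha:type22} give $\alpha_j^*=B\tilde\delta\sum_{\ell=0}^{D-j}E^\ell>0$, and swapping the order of summation turns $\sum_{j=1}^D\alpha_j^*$ into $B\tilde\delta\sum_{\ell=0}^{D-1}E^\ell(D-\ell)$, which equals $1$ by the definition of $\delta_{D+1}$ in \eqref{eq:general:type2}; moreover $\alpha_D^*=B\tilde\delta$ makes \eqref{eq:general:alphaD2} an equality, \eqref{eq:general:alphai} holds with equality by construction, and \eqref{eq:general:alpha1} rearranges (after substituting the closed form) to $C\le B\tilde\delta\sum_{\ell=0}^{D}E^\ell=C_D$, exactly the case hypothesis. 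In the Type-I case, telescoping \eqref{eq:opt:alpha:type1} gives $\alpha_j^*=E^{-j}\bigl(C-B\tilde\delta\sum_{\ell=0}^{j-1}E^\ell\bigr)$ for $j\in[1:k]$ (and $0$ beyond); $\alpha_1^*>0$ is immediate, monotonicity $\alpha_j^*=(\alpha_{j-1}^*-B\tilde\delta)/E\le\alpha_{j-1}^*$ follows from $E\ge1$, and $\sum_{j=1}^D\alpha_j^*=\sum_{j=1}^k\alpha_j^*=1$ holds because $\delta_k$ was defined in the proof of Lemma~\ref{Lemm:lb1} precisely as the value making the recursion-driven partial sum equal one; on the constraint side, \eqref{eq:general:alpha1} and \eqref{eq:general:alphai} for $i<k$ hold with equality by construction, for $i\ge k+1$ the left-hand side is $0$, and for $i=k$ (since $\alpha_{k+1}^*=0$) the constraint becomes $\alpha_k^*\le B\tilde\delta$.

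The main obstacle is the Type-I case, where I must simultaneously establish $\alpha_k^*\ge 0$ (validity) and $\alpha_k^*\le B\tilde\delta$ (constraint \eqref{eq:i} at $i=k$). Substituting $\tilde\delta=\delta_k$ from \eqref{eq:general:type1} into the closed form for $\alpha_k^*$ and clearing denominators, both reduce to polynomial inequalities in $E$ and $C$ that I would match against the crossover values $C_{k-1}$ and $C_k$; the identity that drives this is
\[
\left(\sum_{j=0}^{k-1}E^{j}\right)^{2}=\sum_{j=0}^{k-1}(j+1)E^{j}+E^{k}\sum_{j=0}^{k-2}(k-1-j)E^{j},
\]
after which $\alpha_k^*\ge 0\Leftrightarrow C\le C_{k-1}$ and $\alpha_k^*\le B\tilde\delta\Leftrightarrow C\ge C_k$, both guaranteed by the window $C_k<C\le C_{k-1}$ supplied by Theorem~\ref{Thm:cor}. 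Once (i) and (ii) hold, $\{\alpha_i^*\}$ is feasible for \eqref{Eq:LP_DP} with value $\tilde\delta$, so $\delta^*\le\tilde\delta$, and with Theorem~\ref{theoremLB} this gives $\delta^*=\tilde\delta$. I would also flag that the reduction of the $m\ge 1$ constraints used $\eta\ge\alpha_1^*\bar\eta/2$ (unimodality of $p_Z$ at the origin); if $\eta$ is so small that this fails, those constraints must be revisited and may force $\delta^*$ strictly above $\tilde\delta$, so this hypothesis — or an explicit lower restriction on $\eta$ — should be recorded in the statement.
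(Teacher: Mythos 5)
Your proposal follows essentially the same route as the paper: Theorem~\ref{theoremLB} supplies the lower bound, and achievability is established by checking that the $\{\alpha_i^*\}$ of Definition~\ref{def:alpha} form a valid probability vector meeting the constraints with $\delta=\tilde\delta$, with the same Type-I/Type-II case split and the same use of the window $C_k<C\le C_{k-1}$; your closed forms, the identity $\bigl(\sum_{j=0}^{k-1}E^j\bigr)^2=\sum_{j=0}^{k-1}(j+1)E^j+E^k\sum_{j=0}^{k-2}(k-1-j)E^j$, and the equivalences $\alpha_k^*\ge0\Leftrightarrow C\le C_{k-1}$ and $\alpha_k^*\le B\tilde\delta\Leftrightarrow C\ge C_k$ all check out. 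Where you differ is in being more careful, and both of your extra checks are genuinely needed. First, the paper asserts that the $\{\alpha_j^*\}$ satisfy the inequalities of Proposition~\ref{prop:dp} ``with equality,'' but in the Type-I case with $\alpha_{k+1}^*=0$ the constraint \eqref{eq:i} at $i=k$ reads $\alpha_k^*\le B\tilde\delta$ and is not an equality by construction; it is exactly the condition $C\ge C_k$, which you verify and the paper does not. Second, the LP \eqref{Eq:LP_DP} also contains the constraints at $y=n+m$ for $m\ge1$, i.e. $p_Z(m)\le E\,p_Z(m-1)+\delta$, which Proposition~\ref{prop:dp} omits entirely. For $m\ge2$ these follow from monotonicity of the $\alpha_j^*$, but the $m=1$ instance $\tfrac{\bar\eta}{2}\alpha_1^*\le E\eta+\tilde\delta$ can genuinely fail in the Type-II regime for small $\eta$: with $E=1$, $D=2$, and $\eta$ near $0$ one gets $\alpha_1^*=2/3$ and $\tilde\delta=\bar\eta/6$, so the constraint demands $\bar\eta/3\le\eta+\bar\eta/6$, which is false, and in fact $\delta^*>\tilde\delta$ there. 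So the caveat you flag at the end is not an artifact of your reduction but a real gap in the theorem as stated; it needs a side condition on $\eta$ (or on $C$) guaranteeing $\tfrac{\bar\eta}{2}\alpha_1^*\le E\eta+\tilde\delta$, and the same issue propagates into Remark~\ref{remark:Singular_to_DP}.
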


\begin{proof}
First note that once we prove that $\tilde\delta$ is achievable, Theorem~\ref{theoremLB} implies that $\delta^* = \tilde\delta$. Thus, we only need to prove the achievability of $\tilde\delta$. Note further that $\{\alpha^*_j\}$ satisfy the inequalities in Proposition~\ref{prop:dp}, with $\delta^*$ replaced by $\tilde\delta$, with equality. Therefore, we only need to show that $\{\alpha^*_i\}$ are valid in the sense that they are non-negative and sum to 1.

 Assume $C > C_D$ in general and in particular, we have $C_k < C \leq C_{k-1}$, for some $k \in [D]$ (recall that $C_0 = \infty$). It follows from Theorem~\ref{Thm:cor} that $\tilde\delta = \delta_k$, where $\delta_k$ is a Type-I lower bound given in \eqref{eq:general:type1}. 
First, we note that according to \eqref{eq:opt:alpha:type1}
\begin{align}
\alpha_1^* = \frac{C-B\tilde\delta}{E} = \frac{C\sum_{\ell=0}^{k-2}E^\ell(\ell+1)+E^{k-1}}{\sum_{\ell=0}^{k-1}E^\ell(\ell+1)} >0.
\end{align}
Now consider $\alpha_k^*$ in \eqref{eq:opt:alpha:type1}. Applying \eqref{eq:opt:alpha:type1} iteratively, we obtain 
\begin{align}\label{eq:alpha:k:positive}
    \alpha_k^* &= \frac{\alpha_{k-1}^*-B\tilde\delta}{E} = \frac{C-B\delta_k\sum_{\ell=0}^{k-1}E^\ell}{E^k}\\&= \frac{\sum_{\ell=0}^{k-1}E^\ell-C\sum_{\ell=0}^{k-2}E^\ell(k-1-\ell)}{{\sum_{\ell=0}^{k-1}E^\ell(\ell+1)}}\\
    &\geq \frac{\sum_{\ell=0}^{k-1}E^\ell-C_{k-1}\sum_{\ell=0}^{k-2}E^\ell(k-1-\ell)}{{\sum_{\ell=0}^{k-1}E^\ell(\ell+1)}} = 0,\label{eq:alpha:k:positive2}
\end{align}
where in the last two steps we have used the fact that $C \leq C_{k-1}$ and the definition of $C_{k-1}$ in \eqref{eq:C:def}. 
Note that the definition of $\{\alpha_i^*\}$ in  \eqref{eq:opt:alpha:type1} together with\footnote{Note that from \eqref{eq:general:type2} we conclude $\delta_{D+1} > 0$. Therefore, $\tilde\delta = \max_i \delta_i > 0$.} $\tilde \delta > 0$ implies that $\alpha^*_{i} < \alpha^*_{j}$ for $2\leq j<i\leq D$. In particular, we have $\alpha_{k-j}^* > \alpha_{k}^* \geq 0$ for $j \in [k-1]$. 
Note that 
 \begin{align}
\sum_{i=1}^{D} \alpha_i^* = \sum_{i=1}^{k} \alpha_i^* = \frac{C\sum_{j=0}^{k-1}E^j - B\delta_k\sum_{j=1}^kjE^{j-1}}{E^k} = 1,
\end{align}
where we use the fact that $\alpha_j^* = 0$ for $j \in [k+1:D]$.

Finally, if $C \leq C_D$ then from Theorem~\ref{Thm:cor}, we conclude $\delta^* = \delta_{D+1}$, which is the Type-II lower bound on $\delta^*$. In this case, $\{\alpha^*_i\}$ are given in \eqref{eq:opt:alpha:type2} and \eqref{eq:opt:alpha:type22}. A similar argument as above shows that 
$0 \leq \alpha^*_i \leq 1$ for each $i\in [D]$ and also $\sum_i \alpha^*_i = 1$.

\end{proof}

This theorem demonstrates that $\{\alpha^*_i\}$, defined in Definition~\ref{def:alpha}, constructs the ``optimal" mechanism when plugging into \eqref{eq:pyn} or equivalently the optimal noise distribution when plugging into \eqref{noise:ngeqD}. 
Fig.~\ref{noisepmf} shows such optimal noise distribution for $\eps = 2.18$ ($E = 8.8463$), $\eta = 0.8$ ($C = 8$), and $D = 6$ for $z \geq 0$ (noting the symmetry for $z < 0$). Note that the minimum true count is assumed to be $n \geq D = 6$, which seems a reasonable assumption in large datasets with hundreds of participants. Using \eqref{eq:C:def}, we find that $7.8867 = C_3 < C \leq C_2 = 8.1229$. Therefore, $\tilde \delta = \delta_3 = 0.0049$ according to Theorem \ref{Thm:cor}. It follows from \eqref{eq:opt:alpha:type1} that only $\alpha_1^*$, $\alpha_2^*$, and $\alpha_3^*$ are non-zero. 
Consequently, the noise distribution is in fact supported on $[-3: 3]$, instead of   what we originally required in P1, i.e., $[-6: 6]$.

Using Equation (1) in \cite{count:dp:gaussian:arxiv}, we also plot in Fig.~\ref{noisepmf}, discrete Gaussian distribution, denoted by $p_{\text{G},Z}(z)$, for $z \geq 0$. We set its variance identical to that of our noise distribution, i.e., $\sigma^2 = \bar\eta \sum_{i=1}^D \alpha_i^* i^2$. It is worth noting that discrete Gaussian distribution is not compactly supported; however, the probability that it assigns to points outside $[-D:D]$ is negligible and is not shown in Fig.~\ref{noisepmf}. While the two distributions may look somewhat similar, they have important differences that substantially impact their privacy performance. Notice, for instance, that 
$p_{\text{G},Z}(\pm 1) = 0.11685$ and $p_{\text{G},Z}(\pm 2) = 0.000416$, and hence $\frac{p_{Y}(n+2|n+1)}{p_{Y}(n+2|n)} = \frac{p_{Y}(n-1|n)}{p_{Y}(n-1|n+1)}$ can be as large as about 282. Note that in our proposed distribution, $p_{Z}(\pm 1) = \frac{\bar\eta}{2}\alpha_1^* =  0.08987$ and $p_{Z}(\pm 2) = \frac{\bar\eta}{2}\alpha_2^* =  0.00960$, and hence $\frac{p_{Y}(n+2|n+1)}{p_{Y}(n+2|n)} =\frac{p_{Y}(n-1|n)}{p_{Y}(n-1|n+1)} = \frac{\alpha^*_1}{\alpha^*_2} = \frac{0.08987\cdots}{0.00960\cdots} = 9.3617$. 
Consequently, in the discrete Gaussian mechanism the smallest $\eps$ such that $p_{Y}(n-1|n)\leq e^\eps p_{Y}(n-1|n+1) + \tilde \delta$ for $\tilde\delta = 0.0049$  is $\eps = 5.6$, while for our mechanism $\eps = 2.18$ is sufficient. This indicates that the optimally-chosen coefficient $\{\alpha_i\}$ in our setting improves the privacy guarantee of discrete Gaussian mechanism. In the next section, we provide more detailed comparison between there two mechanisms.  

\begin{figure}[h]\begin{center}
\includegraphics[width=0.6\columnwidth]{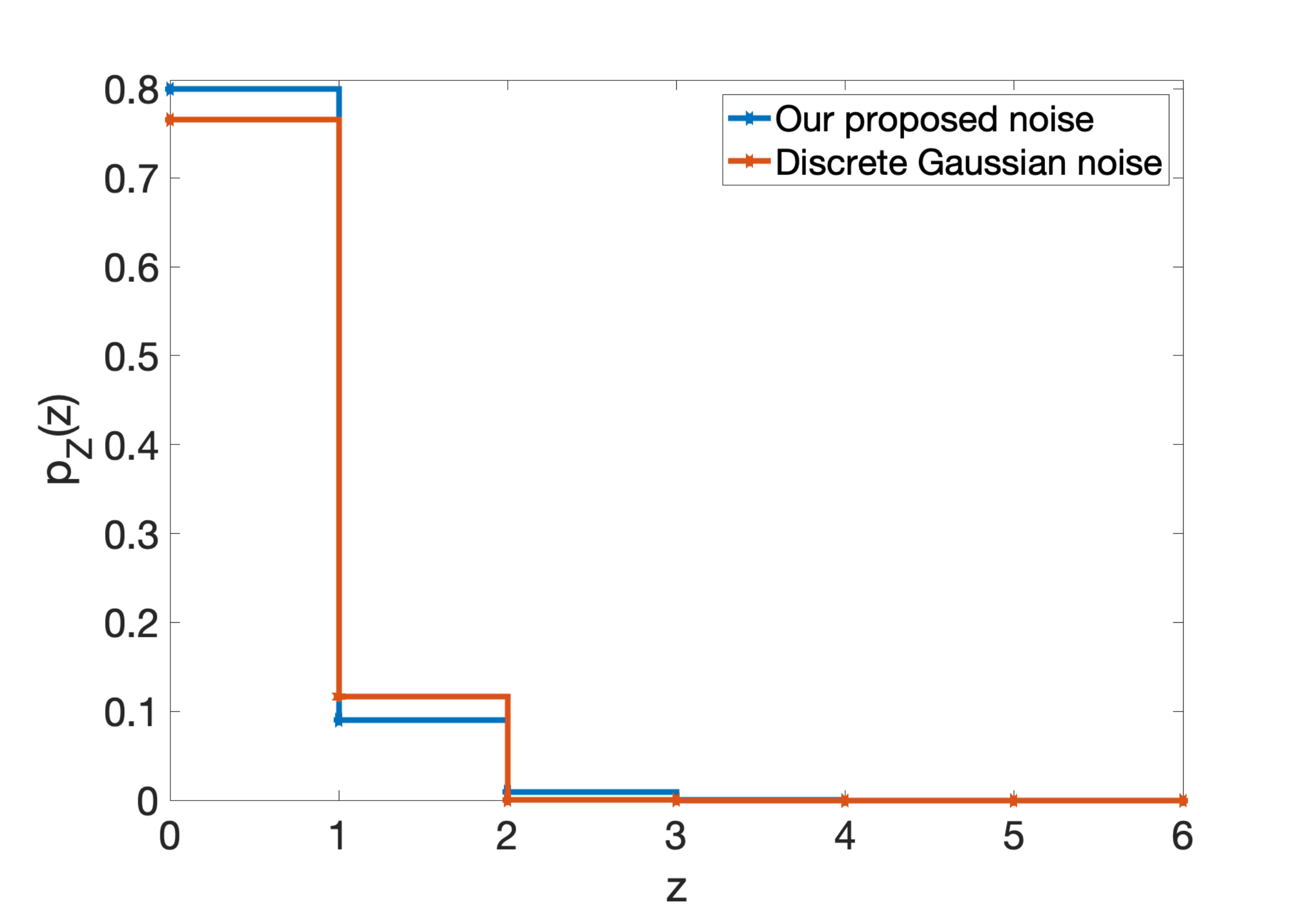}\caption{The noise distribution \eqref{noise:ngeqD} with $\{\alpha_i\}$ being chosen according to Definition \ref{def:alpha} is compared with the discrete Gaussian noise \cite{count:dp:gaussian:arxiv} with the identical variance. Here, we assume 
for $\eps = 2.18$, $\eta = 0.8$, and $D = 6$.  The more `graceful` transition in our noise distribution from $z=1$ to $z=2$ results in an improved privacy performance. }\label{noisepmf}
\end{center}
\end{figure}

\begin{remark}\label{remark:Singular_to_DP}
As mentioned in Remark~\ref{remark:DP_Singular}, we restricted the definition of differential privacy in Definition~\ref{Def:DP} to the singular events. As a result, the parameters $\eps$ and $\delta$ computed in this section do not correspond to the DP parameters. To compute the relationship between these parameters, recall that the mechanism \eqref{eq:pyn} is $(\eps, \delta)$-DP if for any event $S\subset [N+D]$
$$p_{Y|n}(S)\leq e^\eps p_{Y|n+1}(S) + \delta \qquad\text{and}\qquad  p_{Y|n}(S)\leq e^\eps p_{Y|n-1}(S) + \delta,$$
for all possible query responses $n\in [N]$. Since $p_{Y|n}(S) = \sum_{y\in S}p_{Y|n}(y)$, our analysis in this section implies that for any $S\subset[N+D]$
\begin{align*}
    p_{Y|n}(S) - e^\eps p_{Y|n+1}(S) &= \sum_{y\in S}\left[p_{Y|n}(y)- e^\eps  p_{Y|n+1}(y)\right] \\
    &\leq \sum_{y\in S\cap\text{supp}(p_{Y|n})}\left[p_{Y|n}(y)- e^\eps  p_{Y|n+1}(y)\right]\\
    &\leq |S\cap\text{supp}(p_{Y|n})| \max_{y\in [N+D]}\left[p_{Y|n}(y)- e^\eps  p_{Y|n+1}(y)\right]\\
    &\leq (2D+1) \max_{y\in [N+D]}\left[p_{Y|n}(y)- e^\eps  p_{Y|n+1}(y)\right]\\
    &\leq (2D+1)\delta^*,
\end{align*}
where $\text{supp}(p_{Y|n})$ denotes the set of $y\in [N+D]$ with $p_{Y|n}(y)>0$ and a closed-form expression for $\delta^*$ was given in Theorem~\ref{thm:optimal}. The same argument shows that $p_{Y|n}(S) - e^\eps p_{Y|n-1}(S) \leq (2D+1)\delta^*$. This observation indicates that the mechanism  \eqref{eq:pyn} is $(\eps, \min\{1,(2D+1)\delta^*\})$-DP when $\{\alpha_i\}$ is chosen according to Definition~\ref{def:alpha}. 
\end{remark}
\section{Numerical Results}\label{sec:numerical}
We begin this section by numerically computing the DP parameters $\eps$ and $(2D+1)\delta^*$ of our mechanism according to  Theorem~\ref{thm:optimal}.\footnote{We mostly consider practical parameter ranges resulting in $(2D+1)\delta^* \ll 1$. We thus write $(2D+1)\delta^*$ instead of $\min\{1,(2D+1)\delta^*\}$.} In Fig.~\ref{test1}, we assume $\eta = 0.5 $ and plot the $(2D+1)\delta^*$ in terms of $\eps$ for different values of $D$. 
It is clear that $(2D+1)\delta^*$ is no greater than $0.001$ for $D=8$ and $\eps > 1.1$. 
It is worth noting that while we reveal the true count with probability $0.5$, from Definition \ref{def:alpha}, the mechanism's output lies in an acceptable range of $[n-3:n+3]$ with a much higher probability $\eta + \bar\eta(\alpha_1^*+\alpha_2^*+\alpha_3^*) = 0.9945$ for $\eps = 1.5$.
\begin{figure}[h]
\centering
\includegraphics[width=0.6\columnwidth]{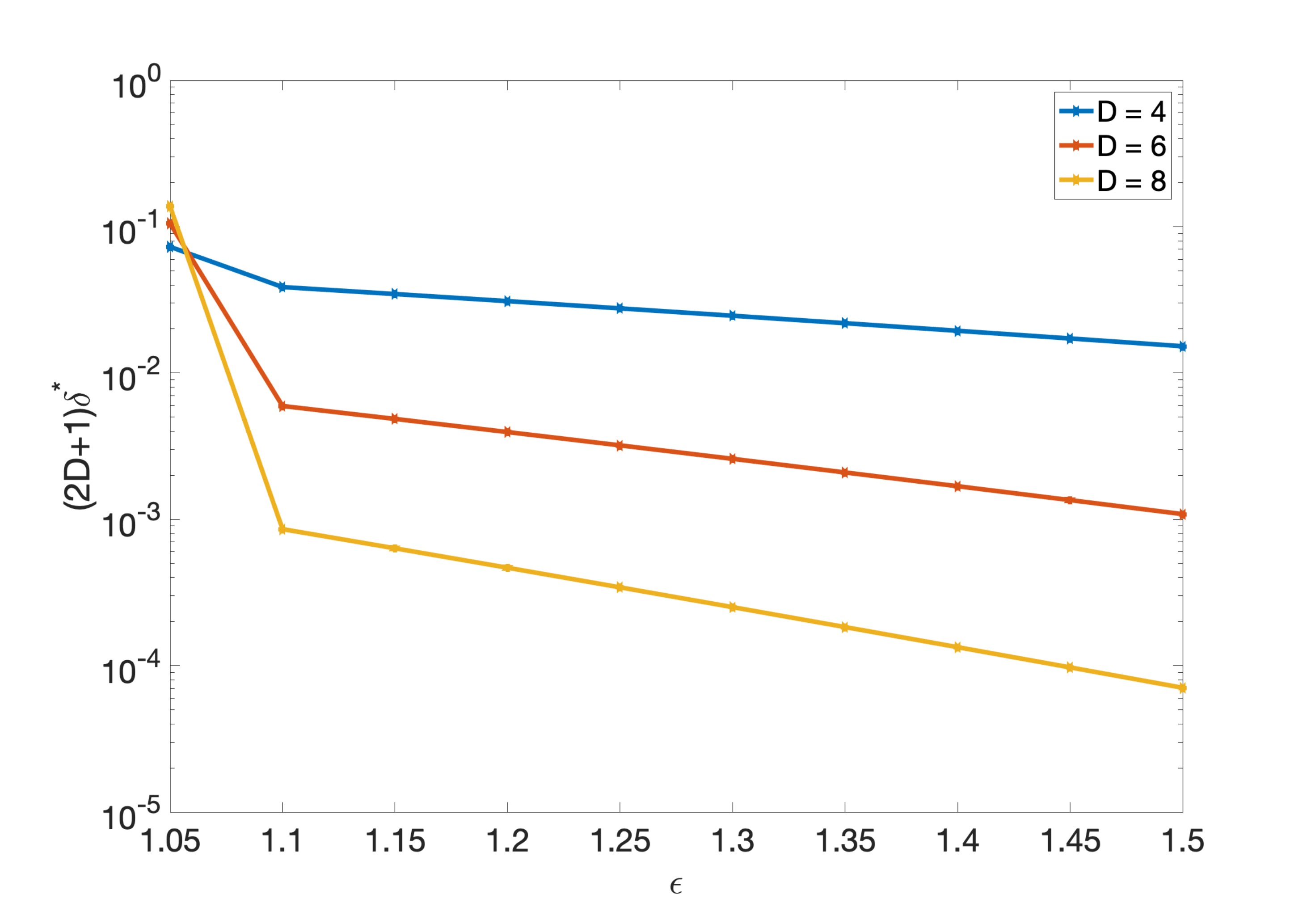}\caption{DP parameters $(2D+1)\delta^*$ versus $\eps$ for $\eta = 0.5 $ and different values of $D$. }\label{test1}
\end{figure}

In Fig.~\ref{test3}, we illustrate the relationship between the DP parameter $(2D+1)\delta^*$ and $\eta$ for different values of $\eps$ and $D$. 
Quite predictably, for given $D$ and $\eps$, there exists a threshold for $\eta$ above which $(2D+1)\delta^*$ grows fast to one; thus indicating a trade-off between utility (reliability) and privacy. To strike a good balance between the reliability and privacy, one may choose the the largest value of $\eta$ just before the `knee' phenomenon occurs. For instance, assuming $D = 8$, our mechanism provides $(1.1, 10^{-3})$-DP and $(2.2, 5\times10^{-7})$-DP guarantee while presenting $\eta = 0.5$ and $\eta = 0.8$, respectively. We remark that $\eps = 1.1$ and $\eps = 2.2$ were numerically chosen such that the knee phenomenon occurs slightly after $\eta = 0.5$ and $\eta = 0.8$, respectively.

\begin{figure}[h]
\centering
\includegraphics[width=0.45\columnwidth,height=0.4\columnwidth]{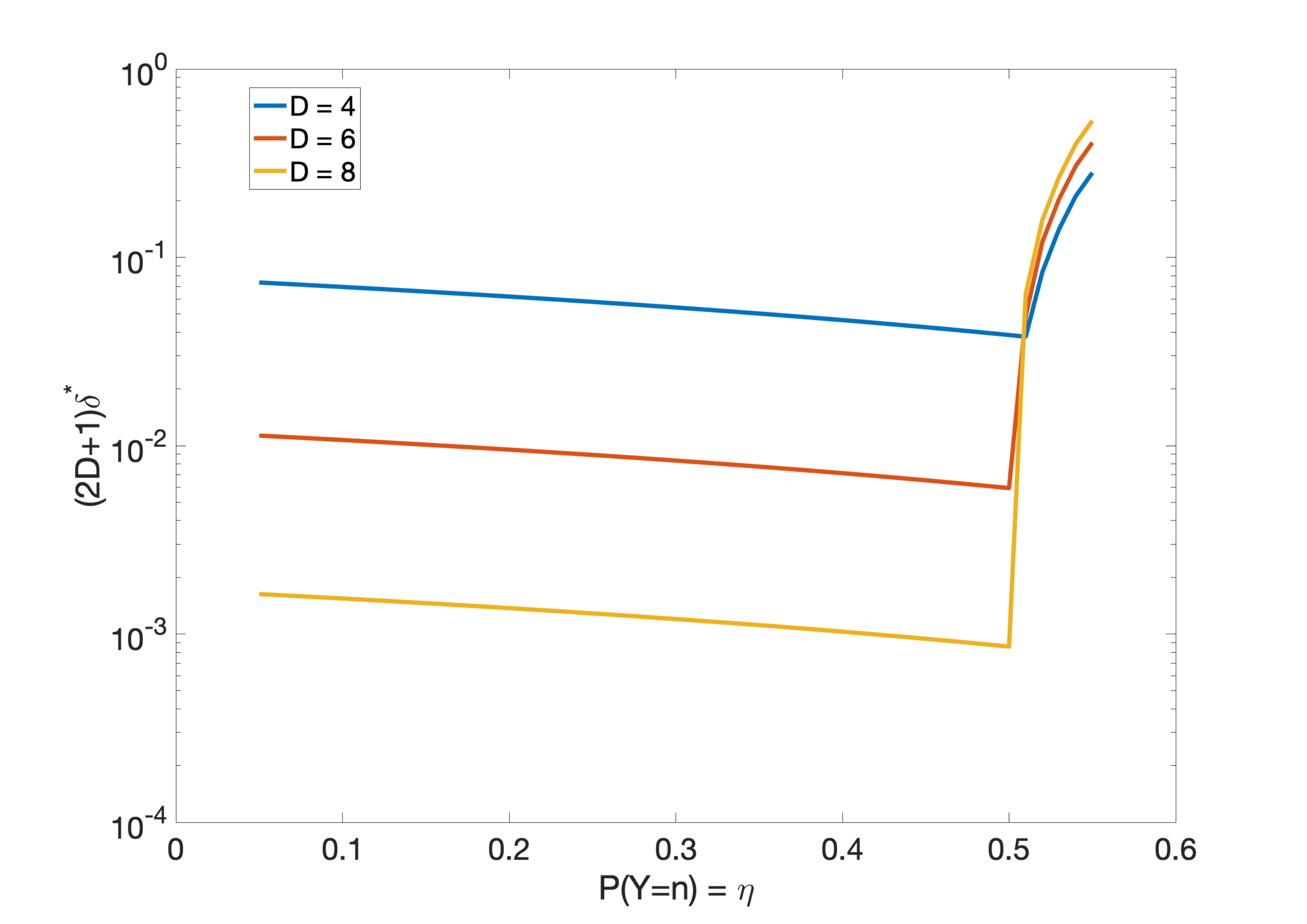}
\includegraphics[width=0.45\columnwidth, height=0.4\columnwidth]{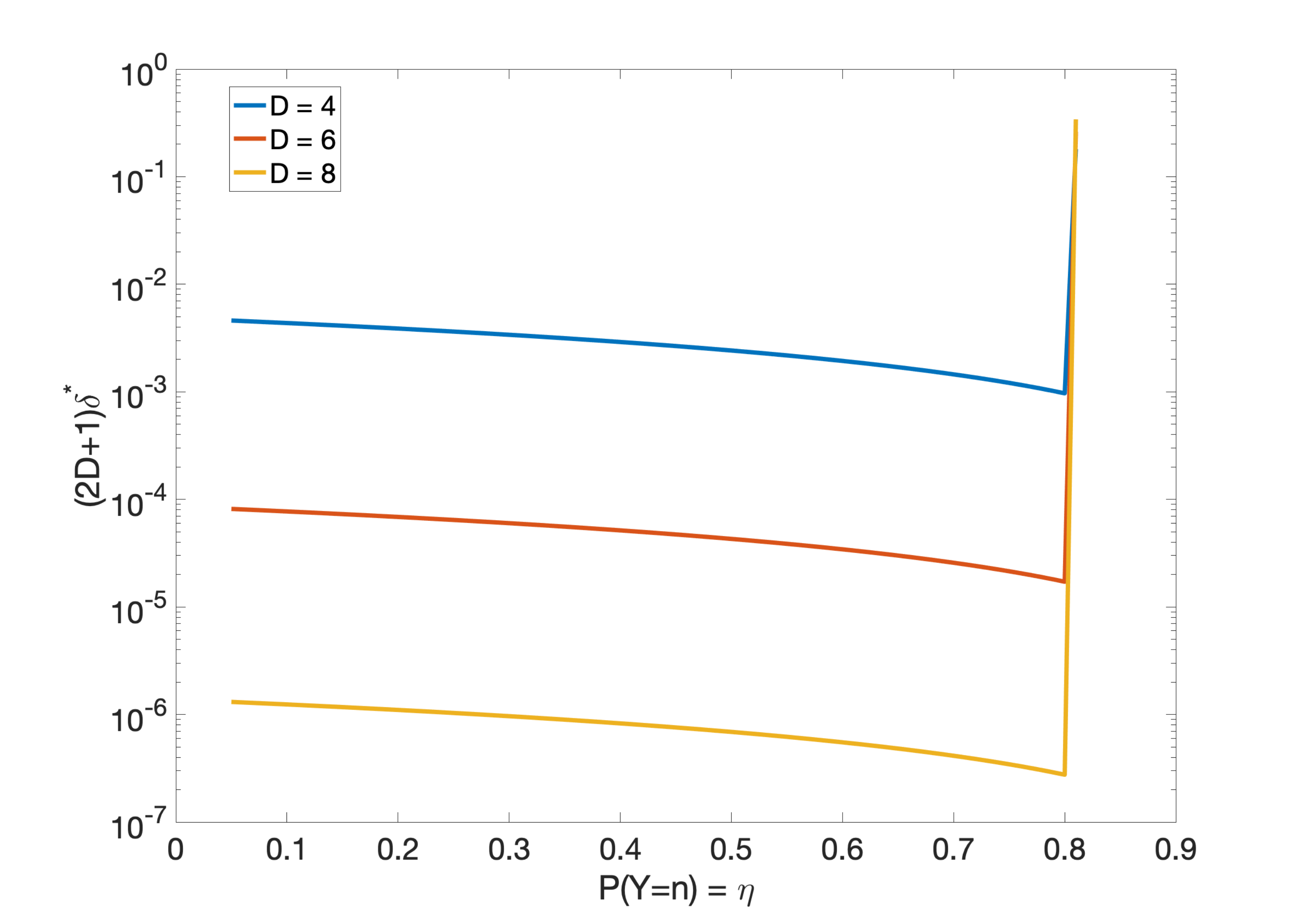}
\caption{DP parameter $(2D+1)\delta^*$ versus $\eta$ for $\eps = 1.1 $ (left) and $\eps = 2.2$ (right) and different values of $D$. }
\label{test3}
\end{figure}
Next, we compare our mechanism with the recently proposed \textit{discrete} Gaussian mechanism \cite{count:dp:gaussian:arxiv} in terms of the utility-privacy performance.  
There may be different ways for conducting such comparison. To have a fair comparison, we take the following steps:  
\begin{enumerate}
    \item Given fixed values of $\eta \in (0,1)$, $D \in \mathbb{N}$, and $\eps\geq 0$, we compute the optimal coefficient $\{\alpha^*_i\}$ (according to Definition \ref{def:alpha}) and the corresponding DP parameter $(2D+1)\delta^*$  (according to Theorem \ref{thm:optimal} and Remark \ref{remark:Singular_to_DP}). We then compute $\sigma^2(\eta, D, \eps)$ the noise variance given by  $$\sigma^2(\eta, D, \eps) \doteq \bar\eta \sum_{i=1}^D \alpha_i^* i^2.$$
    \item We generate a discrete Gaussian probability mass function (pmf) according to Equation (1) in \cite{count:dp:gaussian:arxiv} with the variance $\sigma^2(\eta, D, \eps)$. This dictates that the discrete Gaussian mechanism has the same utility as our mechanism. We refer to the resulting distribution as $p_{\text{G},Z}$.  
    \item Let $\eps_\text{G}$ and $\delta_\text{G}$ be the DP parameters of the discrete Gaussian mechanism. It is proved in \cite[Theorem 7]{count:dp:gaussian:arxiv} that
    \begin{align}\label{delta_G}
        \delta_\text{G} = p_{\text{G}, Z}\left(Z > \eps_\text{G}\sigma^2-0.5\right)-e^{\eps_\text{G}}p_{\text{G}, Z}\left(Z > \eps_\text{G}\sigma^2+0.5\right),
    \end{align}
    where $\sigma^2$ is the variance of the discrete Gaussian noise added to the true count. By replacing $\sigma^2$ by $\sigma^2(\eta, D, \eps)$, we make use of this expression to compare the discrete Gaussian mechanism with our mechanism in terms of privacy parameters.  
\end{enumerate}

\begin{figure}[h]
\begin{center}
\includegraphics[width=0.6\columnwidth]{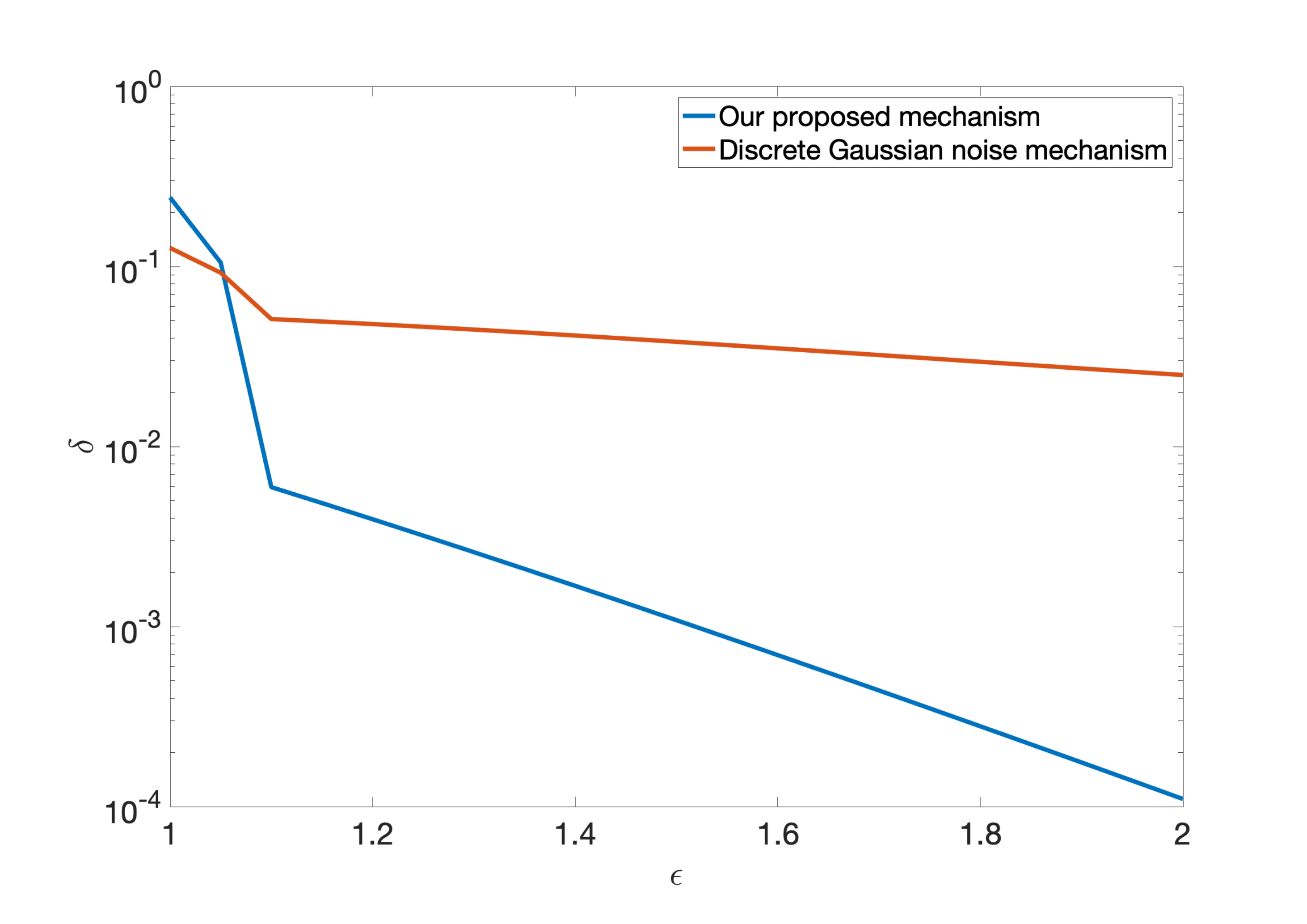}\caption{Comparison of $(\eps, \delta)$-DP performance for $\eta = 0.5$ and $D = 6$ in our proposed mechanism with that of a discrete Gaussian mechanism with the same variance $\sigma^2(\eta, D, \eps)$. In our mechanism, for a given $\eps\geq 0$, we compute the optimal coefficient $\{\alpha^*_i\}$ (according to Definition \ref{def:alpha}) and the corresponding DP parameter $(2D+1)\delta^*$  (according to Theorem \ref{thm:optimal} and Remark \ref{remark:Singular_to_DP}). For the discrete Gaussian mechanism, we use \eqref{delta_G} to compute $\delta_G$.}\label{test4}
\end{center}
\end{figure}

As depicted in Fig \ref{test4}, the $(\eps, \delta)$-DP performance of our proposed scheme is superior to that of the discrete Gaussian mechanism with the same variance for a wide range of privacy parameters (both mechanisms perform rather poorly for $\eps < 1.1$, resulting in quite high $\delta$). Many other experiments were conducted and they all showed  trends similar to what is shown in Fig. \ref{test4}. 
\section{Conclusion}
In this work, we explicitly constructed a privacy-preserving mechanism for responding to integer-valued queries to a dataset containing sensitive attributes.  
This mechanism is noise-additive but, unlike many other noise-additive private mechanisms, it adds an integer-valued noise to the true count. The  noise distribution is parameterized by $\eta\in (0,1)$ which specifies the allowable probability of error in responding the true count, thereby  balancing the privacy-utility trade-off.  
We proved that this mechanism is $(\eps, \delta)$-differentially private where both $\eps$ and $\delta$ depend on noise support size and $\eta$. This mechanism is contrasted with the recently proposed discrete Gaussian mechanism \cite{count:dp:gaussian:arxiv} which adds discretized Gaussian noise (with infinite support) to the true count.  
Our numerical findings indicate that the resulting $\eps$ and $\delta$ are tighter than those obtained in discrete Gaussian mechanism for many practical range of $\eta$.

We conclude this work with a note on future direction. In this work, we assume that there is a single query to a dataset. However, in many practical scenarios there are several queries to a single dataset, each of which might depend on the previous ones. Provided that each query is responded by a our private mechanism, it is essential to determine how privacy degrades as the number of queries increases. We believe that the new technique propose in \cite{hundredrounds} might be helpful for this purpose.

\bibliographystyle{IEEEtran}
	\bibliography{References}
\end{document}